\newif\iflnbi
  \newcommand{\mysubparagraph}[1]{\medskip\noindent{\sf\bf #1}}
  \newenvironment{claimproof}[1][\proofname]{%
    \begin{proof}[#1]%
  }{%
    \end{proof}%
  }
  \newtheorem{observation}{Observation}
  \newcommand{\mysubparagraph}{\subparagraph*}
\setlist[enumerate]{topsep=0pt,itemsep=-1ex,partopsep=1ex,parsep=1ex}
\setlist[itemize]{topsep=0pt,itemsep=-1ex,partopsep=1ex,parsep=1ex}
\setlist[description]{topsep=0pt,itemsep=-1ex,partopsep=1ex,parsep=1ex}
\tikzstyle{bold}=[draw, line width=2pt]
\tikzstyle{optional}=[dashed]
\tikzstyle{path}=[decorate, decoration={snake, pre length=3mm, amplitude=.6mm}]
\tikzstyle{small}=[inner sep=2pt]
\tikzstyle{tiny}=[inner sep=1.7pt]
\tikzstyle{textnode}=[inner sep=0pt]
\tikzstyle{triangle}=[draw, regular polygon, regular polygon sides=3]
\tikzstyle{vertex}=[circle, draw, fill=white]
\tikzstyle{reti}=[vertex, fill=black]
\tikzstyle{leaf}=[vertex, rectangle]
\tikzstyle{leaf2}=[vertex, regular polygon, regular polygon sides=3]
\tikzstyle{smallvertex}=[vertex, small]
\tikzstyle{smallleaf}=[leaf, inner sep=3.3pt]
\tikzstyle{smallleaf2}=[leaf2, inner sep=1.7pt]
\tikzstyle{smalltriangle}=[triangle, inner sep=1.5pt]
\tikzstyle{smallreti}=[reti, small]
\tikzstyle{match}=[edge,line width=3pt]
\tikzstyle{edge}=[draw,-]
\tikzstyle{arc}=[draw,arrows={-Latex[length=6pt]}]
\tikzstyle{boldarc}=[draw, bold, arrows={-Latex[length=10pt]}]
\tikzstyle{revarc}=[draw, arrows={Latex[length=6pt]-}]
\tikzstyle{boldrevarc}=[draw, bold, arrows={Latex[length=10pt]-}]
\newcommand{\nextnode}[5][vertex]{\node[small#1] (#2) at ($(#3)+(#4)$) {} edge[#5] (#3);}
\newcommand{\nextnodelab}[7][vertex]{\node[small#1] (#2) at ($(#3)+(#4)$) {}; \draw[#5] (#2) -- (#3) node[pos=#6] {#7};}
\newcommand{\PROB}[1]{\textsc{#1}\xspace}
\newcommand{\MaxNPD}{\PROB{Max-Network-PD}}
\newcommand{\SubProd}{\PROB{Subset Product}}
\newcommand{\PenSum}{\PROB{Penalty Sum}}
\newcommand{\XTClong}{\PROB{Exact Cover by 3-Sets}}
\newcommand{\XTC}{\PROB{X3C}}
\newcommand{\ucNAP}{\PROB{unit-cost-NAP}}
\newcommand{\NP}{{\normalfont\texttt{NP}}\xspace}
\newcommand{\floorvar}[2]{\lfloor{#2}\rfloor_{#1}}
\newcommand{\floorH}[1]{\floorvar{H}{#1}}
\newcommand{\ceilvar}[2]{\lceil{#2}\rceil_{#1}}
\newcommand{\ceilH}[1]{\ceilvar{H}{#1}}
\newcommand{\w}{\ensuremath{\omega}}
\newcommand{\Oh}{\ensuremath{\mathcal{O}}}
\newcommand{\Instance}{{\ensuremath{\mathcal{I}}}\xspace}
\newcommand{\yes}{{\normalfont\texttt{yes}}\xspace}
\newcommand{\off}{\ensuremath{\operatorname{off}}}
\newcommand{\NetPD}{\NetPDsub{\Net}}
\newcommand{\NetPDsub}[1]{\ensuremath{\text{Network-PD}_{#1}}}
\newcommand{\PD}{\PDsub{\Net}}
\newcommand{\PDsub}[1]{\ensuremath{\text{PD}_{#1}}}
\newcommand{\Net}{\ensuremath{\mathcal{N}}\xspace}
\newcommand{\Tree}{\ensuremath{\mathcal{T}}\xspace}
\newcommand{\Inst}{\ensuremath{\mathcal{I}}\xspace}
\newcommand{\N}{\ensuremath{\mathbb{N}}\xspace}
\newcommand{\gam}[2][p]{\ensuremath{\gamma^{#1}_{#2}}}
\newtheorem{rrule}{Reduction Rule}
\newtheorem{brule}{Branching Rule}
\newtheorem{construction}{Construction}
\title{Phylogenetic Network Diversity Parameterized by Reticulation Number and Beyond} 
\titlerunning{Maximizing Network Phylogenetic Diversity}
	\author{
		Leo van Iersel\inst{1,}\orcidID{0000-0001-7142-4706}\thanks{
			Partially funded by the Dutch Research Council (NWO) grant OCENW.KLEIN.125 and OCENW.M.21.306.
		} \and
        Mark~Jones\inst{1,}\orcidID{0000-0002-4091-7089}${}^\star$  \and
		Jannik~Schestag\inst{1,}\orcidID{0000-0001-7767-2970}\thanks{
			Partially funded by the Dutch Research Council (NWO), project “Optimization for and with Machine Learning (OPTIMAL)” OCENW.GROOT.2019.015.
		} \and
		Celine~Scornavacca\inst{2,}\orcidID{0009-0004-0179-9771}\thanks{
			Partially funded by French Agence Nationale de la Recherche through the CoCoAlSeq Project (ANR-19-CE45-0012).
		} \and
		Mathias~Weller\inst{3,}\orcidID{0000-0002-9653-3690}
	}
	\institute{
	TU Delft, The Netherlands
	  \email{\{l.j.j.vanIersel,j.t.schestag\}@tudelft.nl}\and
	ISEM, Université de Montpellier, CNRS, IRD, EPHE, France
	  \email{celine.scornavacca@umontpellier.fr} \and
	LIGM, CNRS, Université Gustave Eiffel, France
	  \email{mathias.weller@u-pem.fr}
	}
  \author{Leo van Iersel}
    {TU Delft, The Netherlands \and \url{https://leovaniersel.wordpress.com/}}
    {l.j.j.vanIersel@tudelft.nl}
    {https://orcid.org/0000-0001-7142-4706}
    {Partially funded by the Dutch Organisation for Scientific Research (NWO) grant OCENW.KLEIN.125 and OCENW.M.21.306.}
  \author{Mark Jones}
    {TU Delft, The Netherlands \and \url{https://www.thenetworkcenter.nl/People/Postdocs/person/83/Dr-Mark-Jones}}
    {m.e.l.jones@tudelft.nl}
    {https://orcid.org/0000-0002-4091-7089}
    {Partially funded by the Dutch Organisation for Scientific Research (NWO) grant OCENW.KLEIN.125 and OCENW.M.21.306.}
  \author{Jannik Schestag}
    {TU Delft, The Netherlands}
    {j.t.schestag@tudelft.nl}
    {https://orcid.org/0000-0001-7767-2970}
    {Partially funded by the Dutch Research Council (NWO), project “Optimization for and with Machine Learning (OPTIMAL)” OCENW.GROOT.2019.015.}
  \author{Celine Scornavacca}
    {ISEM, Université de Montpellier, CNRS, IRD, EPHE, Montpellier, France \and \url{https://sites.google.com/view/celinescornavacca}}
    {celine.scornavacca@umontpellier.fr}
    {https://orcid.org/0009-0004-0179-9771}
    {Partially funded by French Agence Nationale de la Recherche through the CoCoAlSeq Project (ANR-19-CE45-0012).}
  \author{Mathias Weller}
    {LIGM, CNRS, Univsersité Gustave Eiffel}
    {mathias.weller@cnrs.fr}
    {https://orcid.org/0000-0002-9653-3690}{}
  \keywords{phylogenetic diversity; phylogenetic networks; network phylogenetic diversity; algorithms; computational complexity}
  \date{June 2023 --- \today}
\authorrunning{van Iersel, Jones, Schestag, Scornavacca, and Weller}
\begin{document}
\maketitle

\begin{abstract}
	Network Phylogenetic Diversity (Network-PD) is a measure for the diversity of a set of species based on a rooted phylogenetic network (with branch lengths and inheritance probabilities on the reticulation edges) describing the evolution of those species.
	We consider the \MaxNPD problem: Given such a network, find~$k$ species with maximum Network-PD score. We show that this problem is fixed-parameter tractable (FPT) for binary networks, by describing an optimal algorithm running in~$\Oh(2^r \log(k)(n+r))$~time, with~$n$ the total number of species in the network and~$r$ its reticulation number.
	Furthermore, we show that \MaxNPD is NP-hard for level-1 networks, proving that, unless P=NP, the FPT approach cannot be extended by using the level as parameter instead of the reticulation number.
\end{abstract}

\section{Introduction}
As human activities drive a sixth mass extinction~\citep{10.1093/biosci/bix125},
and in the absence of a serious political response to this crisis~\cite{doi:10.1177/00368504231201372},
studying \emph{phylogenetic diversity} (PD) is timely.\todo{JS: We are rushing directly into PD. Shouldn't we first say that we want to preserve species and therefore a study of biodiversity is needed. Then, we can talk about PD.}

Indeed, when experiencing a widespread and rapid decline in Earth's biodiversity,
one could wonder where to put our efforts in order to preserve a maximum amount of \emph{biodiversity},
given some temporal and economic constraints~\cite{weitzman1998noah}.
The concept of PD is an attempt at answering this question.
The concept has been introduced three decades ago in an impactful paper by Daniel Faith~\citep{faith1992conservation}.
The underlying idea is simple: if we want to preserve as much biodiversity as possible within a group~$X$ of species and we can rescue at most~$k$ species,
then we should focus our effort on a size-$k$ subset~$S\subseteq X$ of species that showcase, overall, a wide range of features, that is,
the distinct traits and qualities covered by the species of~$S$ are maximum among all such subsets.
This \emph{feature diversity} (FD) of~$S$ is often approximated using the PD of~$S$, which is  in turn defined as follows:
Given a tree~$T$ representing the evolution of the species in~$X$,
the PD of $S$ (in $T$) is the sum of the branch lengths of 
the subtree connecting the root and the species in~$S$. 
(Note that approximating FD with PD may not always be appropriate, see~\citep{wicke2021formal}.)

PD has been extensively used in the context of tree-like evolution, and, given a tree $T$ and an integer $k$, an optimal solution with~$k$ species can be found with a greedy algorithm~\citep{steel2005phylogenetic,pardi2005species}.

However, when the evolution of the species under interest is also shaped by reticulate events
such as hybrid speciation, lateral gene transfer, or recombination,
then the picture is no longer as rosy.
In the case of reticulate events, a single species may inherit genetic material and, thus, features from multiple direct ancestors, and
its evolution should be represented by a phylogenetic network~\citep{huson2010phylogenetic} rather than a tree.
Several ways of extending the notion of PD for networks have been proposed~\citep{WickeFischer2018,bordewich2022complexity,MAPPD},
one of which is called Network-PD.
The optimization problem linked to Network-PD, i.e.\ computing the maximum \NetPD~score
over all subsets of species of size at most~$k$ for a given phylogenetic network $\Net$, is named \MaxNPD.
\citet{bordewich2022complexity} proved that \MaxNPD is NP-hard
 and
 cannot be approximated in polynomial time with an approximation ratio better than $1 -\frac{1}{e}$ unless \mbox{P = NP}; furthermore, it remains NP-hard even for the restricted class of phylogenetic networks called ``normal'' networks.

\looseness=-1
The contribution of this paper is twofold.
First, we present an algorithm for \MaxNPD parameterized by the reticulation number of the input network.
Herein, we leverage the greedy algorithm for PD on trees~\cite{steel2005phylogenetic,pardi2005species}
to efficiently process the subtree below a reticulate event.
Surprisingly, we show that this algorithm cannot be generalized to use the ``level'' as parameter unless \mbox{P = NP}. The level of a network is a measure of its treelikeness, formally defined in the next section, which can be smaller than the reticulation number.
More precisely, we prove that \MaxNPD is NP-hard even on level-1 networks (which are networks without overlapping cycles),
thereby answering an open question of \citet{bordewich2022complexity}.

\section{Preliminaries}

For a positive integer $n$, denote $[n]:= \{1,\dots, n\}$.
Let $(0,1):= \{x \in \mathbb{R}: 0 < x < 1\}$ and $[0,1]:= (0,1) \cup \{0,1\}$.
Let $\mathbb{R}_{>0}: = \{x \in \mathbb{R}: x > 0\}$ 
and $\mathbb{R}_{\ge 0}: = \mathbb{R}_{> 0} \cup \{0\}$.
For a set~$Z$ and an integer~$k$ with $k \leq |Z|$, by $\binom{Z}{k}$ we denote the set of all subsets of $Z$ with exactly $k$~elements.
In this paper, we make use of both natural and binary logarithms. We write $\ln x$, and $\log_2 x$, to denote the logarithm of $x$ to the base $e$ and 2, respectively.

\mysubparagraph{Phylogenetics.}
Consider the example network of \Cref{fig:example-network}.
Given a set of taxa $X$, a \emph{phylogenetic network on $X$} or \emph{$X$-network} is a directed acyclic graph $\Net = (V,E)$ in which the \emph{leaves}, vertices of indegree~$1$ and outdegree $0$, are bijectively labeled with elements from $X$, and in which the \emph{root} is the single vertex of indegree $0$ and outdegree~$2$, and in which all other vertices either are \emph{tree vertices} and have indegree~$1$ and outdegree at least~$2$ or are \emph{reticulations} and have indegree at least~$2$ and outdegree~$1$.
Edges incoming at reticulations are \emph{reticulation edges}.
When~$X$ is clear from context, we refer to an $X$-network simply as a \emph{network} or \emph{phylogenetic network}.
A \emph{phylogenetic tree $\Tree = (V,E)$ on $X$}  or \emph{$X$-tree} is an $X$-network with no reticulations.
A network is \emph{binary} if the maximum indegree and outdegree of any vertex is $2$.

\newcommand{\angledlabel}[3][]{\hspace{#2}\parbox{10ex}{{\bf #1}\newline\rotatebox{-15}{\textit{#3}}}}
\newcommand{\edgelab}[3]{\rotatebox{#1}{ \rotatebox{-#1}{\scriptsize #2} \rotatebox{-#1}{\scriptsize #3\phantom{xx}}}}

\begin{figure}[t]
  \centering\hspace{-8ex}
  \begin{tikzpicture}[xscale=1, yscale=1]
    \tikzstyle{sol}=[line width=1pt]
    \tikzstyle{nosol}=[dashed]

    \node[smallvertex] (root) at (0,0) {};

    \nextnodelab{r}{root}{-45:4.5}{revarc, nosol}{.4}{\edgelab{30}{0.0}{42}};
    \nextnodelab[leaf, label=below:{\angledlabel[F]{8ex}{Neo.~brev.}}]{rL}{r}{-45:2}{revarc, nosol}{.3}{\edgelab{30}{0.0}{1}};

    \nextnodelab{l}{root}{-135:4.5}{revarc, sol}{.4}{\edgelab{-30}{1.0}{36}}
    \nextnodelab[leaf, label=below:{\angledlabel[A]{8ex}{Neo.~simi.}}]{lL}{l}{-135:2}{revarc, sol}{0.4}{\edgelab{-30}{1.0}{10}}
    \nextnodelab[reti]{lr}{l}{-45:1}{revarc, sol}{0.5}{\edgelab{30}{0.4}{4}}
    \nextnodelab[leaf, label=below:{\angledlabel[B]{8ex}{Neo.~fasc.}}]{lrL}{lr}{-135:1}{revarc, sol}{0.5}{\edgelab{-30}{1.0}{6}}

    \nextnodelab{m}{root}{-90:1.44}{revarc, sol}{.4}{\edgelab{0}{0.784~}{3~~}}
    \nextnodelab{ml}{m}{-135:1.5}{revarc, sol}{.5}{\edgelab{-30}{0.64}{11}}
    \draw[arc, sol] (ml) -- (lr) node[pos=0.4] {\edgelab{-30}{0.4}{3}};
    \nextnodelab{mlr}{ml}{-45:1}{revarc, sol}{.5}{\edgelab{30}{0.4}{2}}
    \nextnodelab[leaf, label=below:{\angledlabel[C]{8ex}{Lam.~call.}}]{mlrL}{mlr}{-135:2}{revarc, nosol}{0.4}{\edgelab{-30}{0.0}{2}}
    \nextnodelab{mr}{m}{-45:2}{revarc, sol}{0.5}{\edgelab{30}{0.4}{47}}
    \nextnodelab[reti]{mrl}{mr}{-135:1.5}{revarc, sol}{0.5}{\edgelab{-30}{0.4}{56}}
    \draw[arc, sol] (mlr) -- (mrl) node[pos=0.5] {\edgelab{30}{0.4}{5}};
    \nextnodelab[reti]{mrr}{mr}{-45:1.5}{revarc, nosol}{0.5}{\edgelab{30}{0.0}{1}}
    \draw[arc, nosol] (r) -- (mrr) node[pos=0.3] {\edgelab{-30}{0.0}{4}};
    
    \nextnodelab[leaf, label=below:{\angledlabel[D]{8ex}{Neo.~wau.}}]{mrlL}{mrl}{-45:1}{revarc, sol}{0.4}{\edgelab{30}{1.0}{4}}
    \nextnodelab[leaf, label=below:{\angledlabel[E]{8ex}{Lam.~spec.}}]{mrrL}{mrr}{-45:1}{revarc, nosol}{0.5}{\edgelab{30}{0.0}{3}}

    \begin{pgfonlayer}{background}
      \path[fill=gray!25] (root.center) -- (r.center) -- (mrr.center) -- (mr.center) -- (mrl.center) -- (ml.center) -- (lr.center) -- (l.center) -- (root.center);
    \end{pgfonlayer}
	\end{tikzpicture}
	\caption{\looseness=-1
		A hypothesized heritage of several species of fish in a phylogenetic network~\cite{KDS+07}. 
		We take the inheritance probabilities to be~$0.4$ for reticulation edges and~1 for other edges.
        Edge weights are indicated by integers to the right of each edge.
		\emph{Edge weights and inheritance probabilities are not based on data and for illustrative purposes only.}
		The three reticulations are depicted as black filled vertices.
		The biggest subgraph without cut edges is shaded.
		The level and the reticulation number of the network are~3.
		It can be shown that the sets $\{\mbox{B},\mbox{D}\}$ 
		and $\{\mbox{B},\mbox{C},\mbox{D},\mbox{F}\}$ 
		maximize \NetPD{} among all size-2 and size-4 subsets of taxa, respectively.
        As an example, we illustrate how to compute the Network-PD score for~$Z=\{\mbox{A},\mbox{B},\mbox{D}\}$.
		The decimal numbers 
        left of the edges indicate the $\gam{Z}(e)$-values (see \Cref{def:gamma}), leading to a score of $\NetPD^p(Z)=195.968$.
        Dashed edges have $\gam{Z}(e)=0$ and hence do not contribute towards the Network-PD score.
	}
	\label{fig:example-network}
\end{figure}


The \emph{reticulation number} of a network~$\Net$ is the sum of the indegrees of all reticulations minus the number of reticulations.
If $\Net$ is binary, then the reticulation number is exactly the number of reticulations.
The \emph{level} of $\Net$ is the maximum reticulation number
among subgraphs with no cut-arcs (arcs whose removal disconnects the network).

For each edge $e = uv$ we say that $u$ is a \emph{parent} of~$v$ and $v$ is a \emph{child} of $u$.
For vertices $u,v \in V$, we say $u$ is an \emph{ancestor of $v$} and $v$ is a \emph{descendant of $u$} if there is a directed path from $u$ to $v$ in $\Net$. If in addition $u \neq v$, we say $u$ is a \emph{strict} ancestor of $v$ and~$v$ a \emph{strict} descendant of $u$.
The set of \emph{offspring} of $e$, denoted $\off(e)$, is the set of all $x \in X$ which are descendants of $v$.
%
Throughout this paper, we use the terms taxon/taxa, species, and leaf/leaves interchangeably.

\mysubparagraph{Diversity.}
We assume that each edge $e$ in a network $\Net = (V,E)$ has an associated positive integer \emph{weight} $\omega(e)$.\todo{MW: why 'positive'? weight-0 reticulation edges make sense biologically. Can we get rid of this assumption?}
These weights are used to represent some measure of difference between two species. 
%
Given an $X$-tree $\Tree$ and a weight function~$\w:E\to\N$, the \emph{phylogenetic diversity} $\PDsub{\Tree}(Z)$ of any subset $Z \subseteq X$ is given by 
%
  $\PDsub{\Tree}(Z) := \sum_{e | \off(e)\cap Z \neq \emptyset}\;\omega(e)$.
%
That is, $\PDsub{\Tree}(Z)$ is the total weight of all edges in $\Tree$ that are above some leaf in~$Z$.
%

The phylogenetic diversity model assumes that features of interest
appear along edges of the tree with frequency proportional to the weight of that edge, and that any feature belonging to one species is inherited by all its descendants. Thus,  $\PDsub{\Tree}(Z)$ corresponds to the expected number of distinct features appearing in all species in $Z$.

Initially defined only for trees, several extensions of the definition to phylogenetic networks have recently been proposed \cite{bordewich2022complexity,WickeFischer2018}.
In this paper, we focus $\NetPD$ (defined below), which allows the case that reticulations may not inherit all of the features from every parent.
This is modeled via an \emph{inheritance probability} $p(e) \in [0,1]$ on each reticulation edge $e = uv$. Here,~$p(e)$ represents the expected proportion of features present in $u$ that are also present in $v$; or equivalently, $p(e)$ is the probability that a feature in $u$ is inherited by $v$. Non-reticulation edges can be considered as having inheritance probability~$1$.

For a subset of taxa $Z \subseteq X$,
the measure $\NetPD(Z)$ represents the expected number of distinct features appearing in taxa in~$Z$~\cite{bordewich2022complexity}.
For each evolutionary branch~$uv$, this measure is obtained by multiplying
the number~$\w(uv)$ of features developed on the branch~$uv$ (which is assumed to be proportional to the length of the branch)
with the probability~$\gam{Z}(uv)$ that a random feature appearing in $u$ or developed on $uv$ will survive when preserving~$Z$.

Formally, we define~$\gam{Z}(uv)$ as follows. Consider an example in~\Cref{fig:gamma}.
\begin{figure}[t]
	\centering
	\begin{minipage}[t]{18ex}
		\begin{tikzpicture}[scale=.8,every node/.style={scale=.8}]
			\node[smallvertex, label=right:$\rho$] (root) at (0,0) {};
			\node[smallvertex, label=left:$v_1$] (v1) at ($(root) + (-1,-.75)$) {};
			\node[smallvertex, label=right:$v_2$, fill=black] (v2) at ($(v1) + (2,-.5)$) {};
			\node[smallvertex, label=right:$v_3$] (v3) at ($(v2) + (0,-.5)$) {};
			\node[smallvertex, label=left:$v_4$, fill=black] (v4) at ($(v1) + (0,-1.5)$) {};
			
			\draw[thick,-stealth] (root) to (v1);
			\draw[thick,-stealth] (root) to (v2);
			\draw[thick,-stealth] (v1) to (v2);
			\draw[thick,-stealth] (v1) to (v4);
			\draw[thick,-stealth] (v2) to (v3);
			\draw[thick,-stealth] (v3) to (v4);
			
			\nextnode[leaf, fill=white, label=right:$\ell_1$]{l1}{v3}{-90:.75}{revarc};
			\nextnode[leaf, fill=white, label=left:$\ell_2$]{l2}{v4}{-90:.75}{revarc};
		\end{tikzpicture}
	\end{minipage}\hfill
	\begin{minipage}[t]{63ex}
		\vspace{-3 cm}
		\begin{tabular}{lcccccccc}
			\toprule
			& $\rho v_1$ & $\rho v_2$ & $v_1 v_2$ & $v_2 v_3$ & $v_1 v_4$ & $v_3 v_4$ & $v_3 \ell_1$ & $v_4 \ell_2$ \\
			\midrule
			$\w(e)$         & 50    & 40    & 10    & 5     & 30    & 8     & 4     & 2     \\
			$p(e)$          & 1     & 0.4   & 0.5   & 1     & 0.2   & 0.6   & 1     & 1     \\
			\midrule
			$\gamma_{Z_1}^p(e)$ & 0.5   & 0.4   & 0.5   & 1     & 0     & 0     & 1     & 0     \\
			$\gamma_{Z_2}^p(e)$ & 0.44  & 0.24  & 0.3   & 0.6   & 0.2   & 0.6   & 0     & 1     \\
			$\gamma_{Z_3}^p(e)$ & 0.6   & 0.4   & 0.5   & 1     & 0.2   & 0.6   & 1     & 1     \\
			\bottomrule
		\end{tabular}
	\end{minipage}
	\caption{An example for calculating $\gamma_Z^p(e)$.
	Reticulations are black.
	The chosen sets are~$Z_1 = \{\ell_1\},Z_2 = \{\ell_2\},Z_3 = \{\ell_1,\ell_2\}$.
	$\NetPD^p(Z)$ for $Z=Z_1$, $Z_2$, $Z_3$ is 55, 50.4, and 72.8, respectively.
	}
	\label{fig:gamma}
\end{figure}%

\begin{definition}\label{def:gamma}    
Given a network $\Net = (V,E)$ with edge weights $\w:E\to\N$,
probabilities $p:E\to[0,1]$
and a set of taxa $Z\subseteq X$,
we define $\gam{Z}: E \to [0,1]$ recursively for each edge~$uv\in E$ as follows:
\begin{itemize}
  \item If $v$ is a leaf, then $\gam{Z}(uv) := p(uv)$ if $v \in Z$, and $\gam{Z}(uv) = 0$ otherwise.\\
    (\textbf{Intuition}: The features of $v$ survive if and only if $v$ is preserved by $Z$.)\\
    In most of the paper, with the notable exception of \Cref{sec:branching}, $p(uv)=1$ if $v$ is a leaf.
  \item If $v$ is a reticulation with outgoing arc $vx$, then $\gam{Z}(uv) = p(uv)\cdot\gam{Z}(vx)$.\\
    (\textbf{Intuition}: $v$'s features are a mixture of features of its parents and
    the features of $u$ have a certain probability $p(uv)$ of being included in this mix and, thereby, survive in preserved descendants of $x$.)
  \item If $v$ is a tree node with children $x_i$,
    then $\gam{Z}(uv) = 1 - \prod_i (1-\gam{Z}(vx_i))$.
    In the special case that $v$ has two children, this is equal 
    to $\gam{Z}(vx) + \gam{Z}(vy) - \gam{Z}(vx)\cdot\gam{Z}(vy)$.\\
    (\textbf{Intuition}: To lose a feature of $v$, it has to be lost in both children~$x$ and $y$ of $v$, which are assumed to be independent events, since both copies of the feature develop independently.)
\end{itemize}
\end{definition}
When clear from the context, we will omit the superscript~$p$.
Further, we only consider values of $p$ on edges incoming to leaves or reticulations, so we may restrict the domain of $p$
to those edges.
%
%
We can now define the measure $\NetPD^p(Z)$ for a subset of taxa $Z$ as follows:
    $\NetPD^p(Z) = \sum_{e \in E}\w(e)\cdot \gam{Z}(e)$.

Since we assume that all weights are non-negative, we observe that both $\gam{Z}(e)$ and $\NetPD(Z)$ are monotone on $Z$,
that is, $\gam{Z'}(e) \leq \gam{Z}(e)$ and $\NetPD^p(Z') \leq \NetPD^p(Z)$
for all $Z' \subseteq Z \subseteq X$.
We can now formally define the main problem studied in this paper:



\medskip
\begin{fbox}{
		\parbox{0.9\textwidth}{
			\MaxNPD\\
			\textbf{Input}:
      A phylogenetic network $\Net=(V,E)$ on $X$ with edge weights~$\omega:E\to\mathbb{N}$,
      inheritance probabilities~$p:E\to[0,1]$,
      and integers~$k,D\in\mathbb{N}$.\\
			\textbf{Question}:
      Is there a $Z\subseteq X$ with $|Z|\leq k$ and $\NetPD(Z) \geq D$?
	}}
\end{fbox}

\medskip\noindent
Note that, if $p(e)=1$ for all edges~$e$ incoming to leaves (all ``preservation projects'' succeed with probability~1)
and a node~$v$ has no reticulation descendants,
then $\gam[]{Z}(uv) = 1$ if $\off(e) \cap Z \neq \emptyset$, and otherwise $\gam[]{Z}(uv) = 0$ (see \Cref{lem:tree score}).
In this setting, $\NetPD$ coincides with $\PD$ if $\Net$ is a tree.
This holds even if all leaves are weighted and the total weight of $Z$ must not exceed~$k$.

Throughout the paper, we assume that integers are encoded in binary and that rational numbers $p/q$ (with $p$ and $q$ coprime integers) are encoded using binary encodings of $p$ and $q$. See~\Cref{sec:rationalEncodings} for details.


\section{A Branching Algorithm}\label{sec:branching}
\newcommand{\TPD}{\textsc{0/1-cost Max-Network-PD}\xspace}


In this section, we show that \MaxNPD is fixed-parameter tractable with respect to the reticulation number of the input network.
To facilitate the explanation of our algorithm, we solve a generalization of \textsc{Max-Network-PD}, where
(a)~each leaf~$\ell$ is assigned a cost~$c(\ell)\in\{0,1\}$,
(b)~the leaf-edges may have inheritance probability~$p(v\ell)\leq 1$ (as well as the reticulation edges), with the condition that $p(v\ell)=1$ if $c(\ell)=1$, and 
(c)~we look for a subset~$Z$ of leaves with total cost at most $k$ (instead of cardinality~$k$).
We refer to this problem as \TPD and
we use $p(\ell)$ instead of $p(v\ell)$ whenever $\ell$ is a leaf with parent~$v$.
%
%
%

In the following,
let $\Inst := (\Net, \w, p, c, k, D)$ be an instance for \textsc{0/1-cost Max-Network-PD},
let $r$ be a lowest reticulation in $\Net$ with outgoing edge~$rx$.
Our algorithm ``guesses'' whether or not any cost-1 leaf below~$r$ is in a solution $Z$.
If not, then we remove all cost-1 leaves below $r$ and use reduction rules to
(a)~turn the resulting subtree into a single leaf below $r$ and
(b)~turn $r$ into two new cost-0 leaves with inheritance probabilities according to $\gam[]{Z}(rx)$.
If some (unknown) cost-1 leaf below~$r$ is in a solution,
we show that such a leaf can be picked greedily.
Then, we decrement~$k$, set the cost of that leaf to zero, and use the knowledge that~$\gam[]{Z}(rx)=1$ to remove~$r$ from the network.

Note that our reduction and branching rules may create nodes with high outdegree, even if the input network is binary.
However, the algorithm used to solve the resulting non-binary tree can deal with such polytomies~\cite{PDinSeconds}.

\newcommand{\sprod}[1]{\ensuremath{\smashoperator{\prod_{#1}}\,}}

\paragraph*{Reduction.}
Let $r$ be a lowest reticulation in $\Net$ and let $E_r$ be the set of edges below $r$.
The following reduction rules simplify $\Inst$ by getting rid of cost-0 leaves below~$r$.
Note that each rule assumes that $\Inst$ is reduced with respect to the previous rules.
See \Cref{fig:reduction12} for examples of Reduction Rules~\ref{rr:deg2} and \ref{rr:zero p}, and
\Cref{fig:reduction34} for examples of Reduction Rules~\ref{rr:trivial reti} and \ref{rr:partial sol}.

\begin{figure}[t]
	\centering
	\begin{tikzpicture}[scale=.5,every node/.style={scale=0.8}]
		\foreach \j/\woff in {1/2, 2/1} {
			\node[smallvertex, label=right:$r$] (r\j) at (-6 + \j*6,0) {} edge ($(r\j)+(90-30:1)$) edge ($(r\j)+(90+30:1)$);
			\draw[fill=white] ($(r\j)+(0,1.5)$) -- ++(-1,.-1) -- ++ (2*1,0) -- cycle;
			\node[smallvertex, label=left:$\rho$] (root\j) at ($(r\j)+(0,1.5)$) {};
			
			\coordinate (xcoord\j) at ($(r\j) + (-90:1) + (-20:0)$);
			\node[smallvertex, label=right:$x$] (x\j) at (xcoord\j) {};
			\draw (x\j) -- ++(-60:1) -- ++(-180:1) -- (x\j); 
			
			\coordinate (ucoord\j) at ($(x\j) + (-60:1)$);
			\node[smallvertex, label=below:$u$] (u\j) at (ucoord\j) {};
			
			\coordinate (wcoord\j) at ($(u\j) + (-90:.5*\woff) + (0:\woff)$);
			\node[smallvertex, label=right:$w$] (w\j) at (wcoord\j) {};
			\draw (w\j) -- ++(-60:1) -- ++(-180:1) -- (w\j); 
		}
		\node at (3,0) {\scalebox{3}{$\leadsto$}};
		
		\node at (-1.25,1.5) {$(1)$};
		
		\coordinate (v1) at ($(u1) + (-90:.5) + (0:1)$);
		\node[smallvertex, label=below:$v$] (v1) at (v1) {};
		
		\draw[arc] (r1) -- (x1);
		\draw[arc] (r2) -- (x2);
		\draw[arc] (u2) to[bend left=40] (w2);
		\draw[arc] (u1) to[bend left=40] (v1);
		\draw[arc] (v1) to[bend left=40] (w1);

		\draw (9,2) -- (9,-3.7);
	\end{tikzpicture}
	\begin{tikzpicture}[scale=.5,every node/.style={scale=0.8}]
		\foreach \j/\woff in {1/2, 2/1} {
			\node[smallvertex, label=right:$r$] (r\j) at (-6 + \j*6,0) {} edge ($(r\j)+(90-30:1)$) edge ($(r\j)+(90+30:1)$);
			\draw[fill=white] ($(r\j)+(0,1.5)$) -- ++(-1,.-1) -- ++ (2*1,0) -- cycle;
			\node[smallvertex, label=left:$\rho$] (root\j) at ($(r\j)+(0,1.5)$) {};
			
			\coordinate (xcoord\j) at ($(r\j) + (-90:1) + (-20:0)$);
			\node[smallvertex, label=right:$x$] (x\j) at (xcoord\j) {};
			\draw (x\j) -- ++(-60:1) -- ++(-180:1) -- (x\j); 
			
			\coordinate (vcoord\j) at ($(x\j) + (-60:1)$);
			\node[smallvertex, label=right:$v$] (v\j) at (vcoord\j) {};
		}
		\node at (3,0) {\scalebox{3}{$\leadsto$}};
		
		\node at (-1.25,1.5) {$(2)$};
		
		\draw[arc] (r1) -- (x1);
		\draw[arc] (r2) -- (x2);
		
		\nextnode[leaf, fill=white, label=left:$\ell$]{l1}{v1}{-90:1.5}{revarc};
	\end{tikzpicture}
	\caption{Examples of Reduction Rules~\ref{rr:deg2} and \ref{rr:zero p} are depicted on the left and on the right, respectively.
	White leaves have an inheritance probability of zero.}
	\label{fig:reduction12}
\end{figure}
\todo{TS: I think that the sentence "White leaves have an inheritance probability of zero." is clear from the picture and can be removed. MW: but it doesn't hurt, does it?}
\begin{rrule}\label{rr:deg2}
  Let $uv\in E_r$ such that $v$ has a single child~$w$.
  Then, contract $v$ onto $u$ and set $\w(uw) := \w(uv)+\w(vw)$, $p(uw) := p(vw)$.
\end{rrule}
\begin{proof}[Correctness of \Cref{rr:deg2}]

  Let $\Inst' := (\Net', \w', p', c, k, D')$ be the result of applying \Cref{rr:deg2} to $\Inst$.
 Clearly, we have $\gam[p']{Z}(e) = \gam{Z}(e)$ for any edge $e$ below $w$ and all~$Z$.
 So by construction $\gam[p']{Z}(uw) = \gam{Z}(vw)$.
 Observe that $p(uv)=1$ since $v$ is not a leaf and~$r$ is the lowest reticulation in $\Net$; thus, $\gam{Z}(uv)=\gam{Z}(vw)=\gam[p']{Z}(uw)$.
 This implies that $\gam[p']{Z}(e) = \gam{Z}(e)$ for all~$Z$ and any $e \in E \setminus \{uv,vw\}$.
 So,  
 $\NetPD(Z) - \NetPDsub{\Net'}(Z) = \gam{Z}(uv)\cdot\w(uv) + \gam{Z}(vw)\cdot\w(vw) - \gam[p']{Z}(uw)\cdot w(uw) = \gam{Z}(uv)\cdot (w(uv) + w(vw) - w(uw))= 0$.
\end{proof}

\begin{rrule}\label{rr:zero p}
  Let $v\ell\in E_r$ such that $\ell$ is a leaf, $v\ne r$, and $p(v\ell)=0$.
  Then, remove~$\ell$.
\end{rrule}
\begin{proof}[Correctness of \Cref{rr:zero p}]
  Let $u$ be the parent of $v$, and let $v_1,\dots,v_t$ denote the children of $v$ with~$v_i\ne \ell$.
  Then $1 - (1-\gam{Z}(v\ell))\prod_i (1-\gam{Z}(uv_i))$ is the value of $\gam{Z}(uv)$ before removing~$\ell$ which equals~$1 - \prod_i (1-\gam{Z}(uv_i))$, the value afterward, since $\gam{Z}(v\ell)=0$ for all~$Z$.
\end{proof}

\begin{figure}[t]
	\centering
  \begin{tikzpicture}[scale=.5,every node/.style={scale=0.8}]
		\node at (-1.25,1.7) {$(3)$};
		
		\foreach \j in {2, 1} {
			\coordinate (r\j) at (-6 + \j*6,-1.5);
			\node[smallvertex, label=right:$r$] (r1) at (0,-1.5) {};
			\coordinate (root\j) at ($(r\j)+(0,3.199)$);
			\draw[fill=white] ($(root\j)$) -- ++(-1.9,-1.9) -- ++ (2*1.9,0) -- cycle;
			\node[smallvertex, label=left:$\rho$] (root\j) at (root\j) {};
			
			\node[smallvertex, label=above:$z_1$] (z1\j) at ($(r\j)+(90+30:1.5)$) {};
			\node[smallvertex, label=above:$z_2$] (z2\j) at ($(r\j)+(90:1.299)$) {};
			\node[smallvertex, label=above:$z_i$] (zi\j) at ($(r\j)+(90-30:1.5)$) {};
		}
		\node at (3,0) {\scalebox{3}{$\leadsto$}};
		
		\draw[arc] (z11) -- (r1);
		\draw[arc] (z21) -- (r1);
		\draw[arc] (zi1) -- (r1);
		
		\nextnode[leaf, fill=black, label=left:$x$, label=below:$0$]{x1}{r1}{-90:1}{revarc};
		\nextnode[leaf, fill=black, label=left:$\ell_{z_{1}}$, label=below:$0$]{x1}{z12}{-90:1}{revarc};
		\nextnode[leaf, fill=black,  label=below:$0$]{x1}{z22}{-90:1}{revarc};
		\nextnode[leaf, fill=black, label=right:$\ell_{z_{i}}$, label=below:$0$]{x1}{zi2}{-90:1}{revarc};
		
		\draw (9,2) -- (9,-3.5);
	\end{tikzpicture}
  \begin{tikzpicture}[scale=.5,every node/.style={scale=0.8}]
		\node at (-1.25,1.5) {$(4)$};
		
		\foreach \j/\woff in {1/2, 2/1} {
			\node[smallvertex, label=right:$r$] (r\j) at (-6 + \j*6,0) {} edge ($(r\j)+(90-30:1)$) edge ($(r\j)+(90+30:1)$);
			\draw[fill=white] ($(r\j)+(0,1.5)$) -- ++(-1,.-1) -- ++ (2*1,0) -- cycle;
			\node[smallvertex, label=left:$\rho$] (root\j) at ($(r\j)+(0,1.5)$) {};
			
			\coordinate (xcoord\j) at ($(r\j) + (-90:1) + (-20:0)$);
			\node[smallvertex, label=left:$x$] (x\j) at (xcoord\j) {};
			\draw (x\j) -- ++(1.4,-.866) -- ++(-180:2*1.4) -- (x\j); 
			
			\foreach \i in {0,1,2,3}
				\coordinate (xlow\j\i) at ($(x\j) + (-.9 + .6*\i,-.866)$);
		}
		\node at (3,0) {\scalebox{3}{$\leadsto$}};
		
		\draw[arc] (r1) -- (x1);
		\draw[arc] (r2) -- (x2);
		
		\foreach \i/\f/\l in {0/black/0, 1/black/1, 2/black/0, 3/black/0}
			\nextnode[leaf, fill=\f, label=below:$\l$]{l1\i}{xlow1\i}{-90:1}{revarc};
		\foreach \i/\f/\l in {0/white/0, 1/black/1, 2/white/0, 3/white/0}
			\nextnode[leaf, fill=\f, label=below:$\l$]{l2\i}{xlow2\i}{-90:1}{revarc};
		
		\node[leaf, fill=black, label=right:$\ell^*$, label=below:$0$] (ell) at ($(x2)+(1.9,-1.5)$) {};
		\draw[arc] (x2) to[bend left=40] (ell);
	\end{tikzpicture}
	\caption{Examples of Reduction Rules~\ref{rr:trivial reti} and \ref{rr:partial sol} are depicted on the left and on the right, respectively.
	Black leaves have a positive inheritance probability.
	Costs are written below the leaves.}
	\label{fig:reduction34}
\end{figure}
\todo{TS: Why is x in (3) black? It could also be a leaf with inheritance probability 0, especially after the application of Reduction Rule 1. MW: no, RR2 would have removed it if it had inheritence prob 0.}

\begin{rrule}\label{rr:trivial reti}
  Let the unique child~$x$ of $r$ be a leaf with cost $c(x)=0$.
  Then, for each parent~$z_i$ of $r$,
  add a new leaf~$\ell_{z_i}$ to $z_i$ with $c(\ell_{z_i}):=0$ and $p(\ell_{z_i}):=p(z_ir)\cdot p(rx)$ and $\w(z\ell_{z_i}):=\w(z_ir)$.
  Finally, remove $r$ and $x$ and decrease~$D$ by $p(rx)\cdot\w(rx)$.\todo{MW: technically, this requires $D$ to be a rational, but $D$ was defined to be int}
\end{rrule}
\begin{proof}[Correctness sketch of \Cref{rr:trivial reti}]
  As $x$ has cost 0 and $\gam{Z}$ is monotone 
  on
  $Z$, every maximal solution for $\Inst$ contains $x$.
  Likewise, every maximal solution for the modified instance~$\Inst'$ contains all~$\ell_{z_i}$.
  Then, one can verify that maximal solutions for $\Inst$ collect exactly the score of~$rx$ more than maximal solutions for $\Inst'$, which is $p(rx)\cdot\w(rx)$.
\end{proof}


\begin{rrule}\label{rr:partial sol}
  Let $x$ be the unique child of $r$,
  let $Q$ be the set of cost-0 leaves below~$r$ with $|Q|\geq 2$, and
  let $E_x:=E_r\setminus\{rx\}$.
  Then, 
  \begin{enumerate}[(1)]
    \item for each $uv\in E_x$, multiply $\w(uv)$ by $1-\gam{Q}(uv)$,
    \item for each $\ell\in Q$, set $p(\ell):=0$,
    \item reduce $D$ by $\sum_{e\in E_x}\gam{Q}(e)\cdot\w(e)$, and
    \item add a new cost-0 leaf $\ell^*$ as a child of $x$ with $\w(x\ell^*)=0$ and $p(\ell^*)=\gam{Q}(rx)$. 
  \end{enumerate}
\end{rrule}

To prove the correctness of \Cref{rr:partial sol}, we use the following lemma.

\newcommand{\smashprod}[1]{\ensuremath{\;\smashoperator{\prod_{#1}}\;}}

\begin{lemma}\label{lem:tree score}
  Let $uv$ be an edge in $\Net$ such that all descendants of $v$ (including~$v$) are tree nodes and
  let $Z$ be a leaf set of $\Net$.
  Then, $\gam{Z}(uv)=1-\prod_{\ell\in\off(uv)\cap Z}(1-p(\ell))$. 
\end{lemma}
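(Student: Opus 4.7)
\medskip

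The plan is to prove the statement by structural induction on the subtree rooted at $v$. Since every descendant of $v$ (including $v$ itself) is a tree node, the subtree rooted at $v$ is an ordinary rooted tree, so induction on its height (equivalently, on the number of descendants of $v$) is well-defined.

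For the base case, I would take $v$ to be a leaf. Then $\off(e) = \{v\}$. If $v \in Z$, Definition~\ref{def:gamma} gives $\gam{Z}(uv) = p(uv) = p(v)$, while the right-hand side is $1 - (1-p(v)) = p(v)$. If $v \notin Z$, then $\gam{Z}(uv) = 0$ and the right-hand side is the empty product subtracted from $1$, which is $0$. Both sub-cases check out.

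For the inductive step, I would let $v$ be a non-leaf tree node with children $x_1,\ldots,x_m$. By the tree-node case of Definition~\ref{def:gamma},
\[
\gam{Z}(uv) \;=\; 1 - \prod_{i=1}^m \bigl(1 - \gam{Z}(vx_i)\bigr).
\]
Each $x_i$ and all of its descendants are tree nodes (since they are descendants of $v$), so the induction hypothesis applies to every edge $vx_i$, giving
\[
1 - \gam{Z}(vx_i) \;=\; \prod_{\ell\in\off(vx_i)\cap Z}\bigl(1-p(\ell)\bigr).
\]
Substituting and using that $\off(e)\cap Z$ is the disjoint union $\bigsqcup_i \off(vx_i)\cap Z$ (disjointness follows from the fact that $v$ and all its descendants are tree nodes, so no leaf is reachable from two distinct children of $v$), the double product collapses into a single product over $\off(e)\cap Z$, establishing the claim.

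I do not expect any serious obstacle. The only subtlety worth a sentence in the write-up is why the offspring sets of distinct $x_i$ are disjoint, which is exactly where the hypothesis that every descendant of $v$ is a tree node is used: were any descendant a reticulation, two children of $v$ could share offspring and the product would double-count factors of the form $(1-p(\ell))$.
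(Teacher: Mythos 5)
Your proposal is correct and follows essentially the same route as the paper's proof: induction on the depth of the subtree below $v$, with the leaf base case and the tree-node recursion of Definition~\ref{def:gamma} combined with the induction hypothesis to merge the products. Your extra remark that the sets $\off(vx_i)\cap Z$ are pairwise disjoint (because no descendant of $v$ is a reticulation) is a step the paper leaves implicit, and it is a worthwhile clarification rather than a deviation.
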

\begin{proof}
    We prove the claim by induction on the length of a longest path from $v$ to a leaf.
    In the induction base, $v$ is a leaf and, thus,
    $\gam{Z}(e)=1-\prod_{\ell\in\off(e)\cap Z}(1-p(\ell))$
    since this is $1-(1-p(v)) = p(v)$ if $v\in Z$ and $0$ otherwise.
    For the induction step, let $v$ be a tree node with children $x_i$ and assume the claim is true for each edge $vx_i$.
    Then,
    \begin{align*}
        \gam{Z}(e) 
        & \stackrel{\text{Def.~\ref{def:gamma}}}{=} 1 - \prod_i \big(1-\gam{Z}(vx_i)\big)
        \stackrel{IH}{=} 1 - \prod_i \bigg(1 - \big(1- \smashprod{\ell\in\off(vx_i)\cap Z}(1-p(\ell))\big)\bigg)\\
        & = 1 - \prod_i \big(\;\;\;\;\smashprod{\ell\in\off(vx_i)\cap Z}(1-p(\ell))\big) =  1-\smashprod{\ell\in\off(e)\cap Z}(1-p(\ell))
        \tag*{\qed}\def\qed{}\qedhere
    \end{align*}
\end{proof}

\begin{proof}[Correctness of \Cref{rr:partial sol}]
  Let $\Inst' =: (\Net, \w', p', c, k, D')$ be the result of applying \Cref{rr:partial sol} to $\Inst$
  and let $Q':=Q\cup\{\ell^*\}$.
  We assume all solutions~$Z$ to be maximal, implying that they contain all cost-0 leaves.
  Note that generality is not lost since 
  $\NetPD^p(Z)$ is  monotone on $Z$.
  %
  Let $Z$ and $Z'$ be any subsets of leaves of $\Net$ and $\Net'$, respectively, with $Q\subseteq Z$ and $Z'=Z\cup\{\ell^*\}$.
  We show that $Z$ is a solution for $\Inst$
  if and only if $Z'$ is a solution for $\Inst'$.

  We consider the contribution of each edge to the diversity score of $Z$ in $\Net$ and the diversity score of $Z'$ in $\Net'$.
  If $Z$ (and, thus, also $Z'$) contains a cost-1 leaf~$\ell$ below~$r$, then $p(\ell)=1$ and, by \Cref{lem:tree score}, we have $\gam{Z}(rx)=1=\gam[p']{Z'}(rx)$.
  Otherwise, $\gam{Z}(rx)=\gam{Q}(rx)=p'(\ell^*)=\gam[p']{\{\ell^*\}}(rx)=\gam[p']{Z'}(rx)$.
  In both cases, $\gam{Z}(e)=\gam[p']{Z'}(e)$ for all $e\in E\setminus E_r$ since these values only depend on the values of the edges below~$e$.
  %
  %
  Further, note that $\w(x\ell^*)\cdot \gam[p']{Z'}(x\ell^*)=\w(x\ell^*)\cdot p(\ell^*)=0$.
  Thus, it remains to consider the edges in $E_x:=E_r\setminus\{rx\}$.
  For any such edge $e\in E_x$, we observe
  \begin{align*}
    \gam{Z}(e)
    & 
    \stackrel{\text{\Cref{lem:tree score}}}{=} 1 - \sprod{\ell\in\off(e)\cap Z}(1-p(\ell))
    = 1 - \sprod{\ell\in\off(e)\cap Q}(1-p(\ell))\cdot\sprod{\ell\in\off(e)\cap Z \setminus Q}(1-p(\ell))\\
  \intertext{and, since $p(\ell)=1$ for all $\ell\in Z\setminus Q$ by convention stated in the problem definition, we have}
    \gam{Z}(e)
    &
    =\begin{cases}
      \underbrace{1-\sprod{\ell\in\off(e)\cap Q} (1-p(\ell))}_{\gam{Q}(e)} & \text{if }\off(e)\cap Z\subseteq Q\\
      1 & \text{otherwise}
    \end{cases}
    \intertext{and the same holds for $p'$ and $Z'$ instead of $p$ and $Z$ since the leaves below $e$ in $\Net$ are exactly the leaves below $e$ in $\Net'$ ($\ell^*$ cannot be below $e$ in $\Net'$ since $e\in E_x$). Now, since $p'(\ell)=0$ for all $\ell\in Q$ by construction, we have}
    \gam[p']{Z'}(e)
    &
    = \begin{cases}
      0 & \text{if } \off(e)\cap Z\subseteq Q\\
      1 & \text{otherwise}
    \end{cases}
  \end{align*}
  implying $\gam[p]{Z}(e) = \gam[p']{Z'}(e) \cdot (1 - \gam{Q}(e)) + \gam{Q}(e)$.
  Then,
  \begin{align*}
    \sum_{e\in E} \gam{Z}(e) \cdot \w(e) - \smashoperator{\sum_{e\in E\cup\{x\ell^*\}}}\gam[p']{Z'}(e)\cdot\w'(e)
    & = \sum_{e\in E_x}(\gam{Z}(e)\cdot\w(e) - \gam[p']{Z'}(e)\cdot\w'(e))\\
    & \stackrel{\text{Def'n~$\w'$}}{=} \sum_{e\in E_x} \big(\gam{Z}(e) - \gam[p']{Z'}(e)\cdot (1-\gam{Q}(e)) \big)\cdot \w(e)\\
    & = \sum_{e\in E_x} \gam{Q}(e) \cdot \w(e) =  D - D'
  \end{align*}
  Thus, $\sum_{e\in E}\gam{Z}(e)\cdot\w(e)\geq D$ if and only if $\sum_{e\in E}\gam[p']{Z}(e)\cdot\w(e)\geq D'$.
\end{proof}

\paragraph*{Branching.}
Observe that, if no reduction rule applies to $\Net$,
then the subtree below any lowest reticulation~$r$ has at least one cost-1 leaf and at most one cost-0 leaf.
An important part of the correctness of our branching algorithm is that solutions may be assumed to pick cost-1 leaves ``greedily'',
that is, if a solution chooses any cost-1 leaf below~$r$, then there is also a solution choosing a ``heaviest'' cost-1 leaf below~$r$ instead.

\begin{lemma}\label{lem:greedy+branching}
  Let~$r$ be a lowest reticulation in $\Net$ and
  let~$a$ be some cost-1 leaf below~$r$ in $\Net$ maximizing the weight of the $r$-$a$-path.
  Let $Z$ be any set of leaves of $\Net$ containing a cost-1 leaf below~$r$.
  Then, there is a set~$Z^*$ of leaves of $\Net$ with the same cost as $Z$ with $a\in Z^*$ and $\NetPD^p(Z^*)\geq \NetPD^p(Z)$.
\end{lemma}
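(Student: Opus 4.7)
My plan is a direct exchange argument: if $a\in Z$ we simply take $Z^*:=Z$, so assume $a\notin Z$. I will construct $Z^*$ from $Z$ by swapping one carefully chosen cost-1 leaf $b \in Z \cap \off(rx)$ for $a$, which automatically preserves cost since $c(a)=c(b)=1$. The crux of the construction is the choice of $b$: among the cost-1 leaves of $Z$ below $r$, I pick $b$ so as to maximise the depth of $v := \mathrm{LCA}(a,b)$ inside the tree $T_r$ below $r$ (which is indeed a tree, since $r$ is a lowest reticulation). Then $Z^* := (Z \setminus \{b\}) \cup \{a\}$.

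The next step is to compare $\NetPD^p(Z)$ and $\NetPD^p(Z^*)$ edge by edge. Since $b \in Z$ and $a \in Z^*$ are both cost-1 leaves below $r$, \Cref{lem:tree score} yields $\gam{Z}(rx) = \gam{Z^*}(rx) = 1$; combined with $Z\setminus\off(rx) = Z^*\setminus\off(rx)$, this forces the $\gamma$-values on every edge above $r$ to agree, so those edges contribute identically. Inside $T_r$, every edge $e$ outside $P_{v,a}\cup P_{v,b}$ has neither $a$ nor $b$ in $\off(e)$, so its $\gamma$-value is also unaffected. The net change in $\NetPD^p$ is therefore concentrated on $P_{v,a}$ and $P_{v,b}$. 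On $P_{v,a}$ I aim to show $\gam{Z}(e)=0$ and $\gam{Z^*}(e)=1$, for a gain of $\omega(P_{v,a})$; on $P_{v,b}$ we have $\gam{Z}(e)=1$ (since $b\in Z$ with $p(b)=1$) and $\gam{Z^*}(e)\leq 1$, for a loss of at most $\omega(P_{v,b})$. The maximality of $a$ gives $\omega(P_{r,a})\geq \omega(P_{r,b})$, and cancelling the shared prefix $\omega(P_{r,v})$ yields $\omega(P_{v,a})\geq \omega(P_{v,b})$, so gain dominates loss and $\NetPD^p(Z^*)\geq\NetPD^p(Z)$.

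The main obstacle is justifying the claim that $\gam{Z}(e)=0$ for every $e\in P_{v,a}$, i.e., that no leaf of $Z$ sits below any such edge. For cost-1 leaves of $Z$ this is forced by the choice of $b$: any other cost-1 leaf of $Z$ below an edge of $P_{v,a}$ would share with $a$ an ancestor strictly below $v$, contradicting the maximality of $v$. For cost-0 leaves one invokes the reduced form of the instance (guaranteed by the preceding reduction rules, as noted just before the lemma): after reductions the only cost-0 leaf below $r$ is the single leaf $\ell^*$ attached directly to $x$, which is not an offspring of any edge inside $T_r$ except $x\ell^*$ itself, and $x\ell^*\notin P_{v,a}$. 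Without reducedness, small examples can be constructed in which a cost-0 leaf deeper in $T_r$ forces $\gam{Z}(e)>0$ on $P_{v,a}$ and breaks the argument, so this is precisely the step where the reduction rules earn their keep.
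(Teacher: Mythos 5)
Your proof is correct and follows essentially the same exchange argument as the paper's: swap out a single cost-1 leaf $b\in Z$ below $r$ chosen so that $\mathrm{LCA}(a,b)$ is as low as possible, set $Z^*=(Z\setminus\{b\})\cup\{a\}$, and compare contributions along the two paths from that LCA, using $\gam{Z}(rx)=\gam{Z^*}(rx)=1$ to dispose of everything at and above $r$. Your explicit treatment of cost-0 leaves via the reduced form of the instance in fact makes precise a point the paper leaves implicit (its claim that $Z$ has no leaf below the edges of $q_a$ is justified there only by LCA-maximality, which rules out cost-1 leaves alone); the only nitpick is your statement that every edge of $T_r$ outside $P_{v,a}\cup P_{v,b}$ has neither $a$ nor $b$ among its offspring---edges between $x$ and the LCA do, but their $\gamma$-value is $1$ under both $Z$ and $Z^*$, so your conclusion is unaffected.
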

\begin{proof}
  Suppose that $a\notin Z$ as otherwise, the claim is trivial.
  Let $b\in Z$ be a cost-1 leaf below~$r$ such that $u:=LCA(a,b)$ is lowest possible (has maximal (unweighted) distance from~$r$), and
  let $Z^*:=(Z\setminus\{b\})\cup\{a\}$. 
  Let $q_a$ and $q_b$ be the unique paths from $u$ to $a$ and $b$, respectively, and 
  note that $\w(q_a)\geq \w(q_b)$ by choice of $a$.
  Furthermore, for each edge~$uv$ on $q_a$, we know that $Z$ contains no leaf below~$v$ (by maximality of the $r$-$u$-path).
  Since both $a$ and $b$ are cost-1 leaves, we have $p(a)=p(b)=1$ by convention stated in the problem definition,
  implying that $\gam{Z}(e_b)=\gam{Z^*}(e_a)=1$ for all edges $e_a$ on $q_a$ and $e_b$ on~$q_b$.
  Thus, $\NetPD^p(Z^*) - \NetPD^p(Z) = \sum_{e\in E}\gam{Z^*}(e)\w(e) - \sum_{e\in E}\gam{Z}(e)\w(e) = \w(q_a) - \w(q_b) \geq 0$.
\end{proof}

\begin{figure}[t]
  \centering
  \begin{tikzpicture}[scale=.5,every node/.style={scale=0.8}]
    \foreach \j/\xoff in {0/1, 1/0, 2/1} {
      \node[smallvertex, label=right:$r$] (r\j) at (-8 + \j*8,0) {} edge ($(r\j)+(90-30:1)$) edge ($(r\j)+(90+30:1)$);
      \draw[fill=white] ($(r\j)+(0,2)$) -- ++(-1.2,.-1.5) -- ++ (2*1.2,0) -- cycle;
      \node[smallvertex, label=left:$\rho$] (root\j) at ($(r\j)+(0,2)$) {};

      \coordinate (xcoord\j) at ($(r\j) + (-90:1) + (0:\xoff*2)$);
      \coordinate (xlow\j) at ($(xcoord\j) + (-135:2)$);
      \node[smallvertex, label=right:$x$] (x\j) at (xcoord\j) {};
      \draw (x\j) -- ++(-45:2) -- (xlow\j) -- (x\j); 
    }
    \node at (-4,0) {\scalebox{3}{\reflectbox{$\leadsto$}}};
    \node at (4,0) {\scalebox{3}{$\leadsto$}};

    \foreach \i/\f/\c in {1/white/0, 2/white/0, 3/white/0, 4/white/0}
      \node[smallleaf, fill=\f, label=below:$\c$] (l\i) at ($(xlow0)+(\i*.6,-1)$) {} edge[revarc] ($(xlow0)+(\i*.6,0)$);
    \foreach \i/\f/\c in {1/black/1, 2/black/1, 3/black/1, 4/white/0}
      \node[smallleaf, fill=\f, label=below:$\c$] (l\i) at ($(xlow1)+(\i*.6,-1)$) {} edge[revarc] ($(xlow1)+(\i*.6,0)$);
    \foreach \i/\f/\c in {1/black/1, 2/black/0, 3/black/1, 4/white/0}
      \node[smallleaf, fill=\f, label=below:$\c$] (l\i) at ($(xlow2)+(\i*.6,-1)$) {} edge[revarc] ($(xlow2)+(\i*.6,0)$);

    \draw[arc] (r1) -- (x1);

    \foreach \j/\f in {0/white,2/black}{
      \draw[arc] (root\j) to[bend left=40] (x\j);
      \nextnode[leaf, fill=\f, label=left:$\ell$, label=below:$0$]{l\j}{r\j}{-90:1}{revarc}
    }

  \end{tikzpicture}
  \caption{An example of \Cref{br:main} with
  $\Inst_0$ (``do not select a cost-1 leaf below~$r$'') on the left and
  $\Inst_1$ (``select a cost-1 leaf below~$r$'') on the right.
  Black leaves have an inheritance probability of one.
  Costs are written below the leaves.
  Note that the budget for $\Inst_1$ is $k-1$ and that applying \Cref{rr:partial sol} may change the target diversity.}
  \label{fig:branching}
\end{figure}

Now, we can present and prove the correctness of our main branching rule, solving \TPD in $O^*(\binom{|R|}{k})$~time,
where $R$ is the set of reticulations in the input network and $k$ is the budget. 

\begin{brule}[See \Cref{fig:branching}]\label{br:main}
  \looseness=-1
  Let $\rho$ be the root of $\Net$.
  Let~$r$ be a lowest reticulation in~$\Net$ whose unique child~$x$ is not a 0-cost leaf.
  Let $Q$ be the set of cost-0 leaves below~$r$.
  Then,
  \begin{enumerate}[1.]
    \item create the instance~$\Inst_0:=(\Net_0,\w_0,p_0,c_0,k,D)$ by
      \begin{enumerate}[(a)]
        \item setting $p_0(t):=0$ and $c(t):=0$ for all cost-1 leaves~$t$ below~$r$,
        \item replacing $rx$ with $\rho{}x$, setting $\w_0(\rho{}x):=\w(rx)$ and,
        \item adding a new leaf~$\ell$ to $r$ with $p_0(\ell):=\gam{Q}(rx)$ and $c_0(\ell):=\w_0(r\ell):=0$, and
      \end{enumerate}
    \item create the instance~$\Inst_1:=(\Net_1,\w_1,p_1,c_1,k-1,D)$ by
      \begin{enumerate}[(a)]
        \item finding a cost-1 leaf~$a$ below $r$ maximizing the weight of the $r$-$a$-path and setting $c_1(a):=0$,
        \item replacing $rx$ with $\rho{}x$, setting $\w_1(\rho{}x):=\w(rx)$ and
        \item adding a new leaf $\ell$ to $r$ with $p_1(\ell):=1$ and $c_1(\ell):=\w_1(r\ell):=0$.
      \end{enumerate}
  \end{enumerate}
\end{brule}
\begin{proof}[Correctness of \Cref{br:main}]
  Let $P$ denote the set of cost-1 leaves below~$r$ in $\Inst$ and
  recall that $Q$ contains all cost-0 leaves below~$r$ in $\Inst$, and that $c(Q)=c_0(Q)=0$.
  We show that $\Inst$ has a solution~$Z$ if and only if $\Inst_0$ or $\Inst_1$ has a solution.
  Without loss of generality, we may assume solutions to be maximal, that is, they contain all cost-0 leaves.
  For any leaf-set~$Z$ containing all cost-0 leaves in $\Inst$ and
  any leaf-set~$Z_i$ containing all cost-0 leaves in~$\Inst_i$ for some~$i\in\{0,1\}$,
  we then have
  \begin{align}\hspace{-3.5ex} 
    \gam[p_i]{Z_i}(\rho{}x)
    = \begin{cases}
        \gam{Q}(rx) & \text{if $i=0$}\\
        1 & \text{if $i=1$}
      \end{cases} = p_i(\ell) = \gam[p_i]{Z_i}(r\ell)
    \;\text{ and }\;
    \gam{Z}(rx) = \begin{cases}
      \gam{Q}(rx) & \text{if $Z\cap P=\emptyset$}\\
      1 & \text{if $Z\cap P\ne \emptyset$}
    \end{cases}
    \label{eq:rho and x}
  \end{align}
  so, under the condition $Z \cap P = \emptyset \iff i = 0$, we have $\gam[p_i]{Z_i}(\rho{}x)=\gam{Z}(rx)$, so
  \begin{align}
    \gam[p_i]{Z_i}(\rho{}x)\cdot \underbrace{\w_i(\rho{}x)}_{=\w(rx)} + \gam[p_i]{Z_i}(r\ell)\cdot\underbrace{\w_i(r\ell)}_{=0}
    \stackrel{\eqref{eq:rho and x}}{=} \gam{Z}(rx)\cdot\w(rx).
    \label{eq:diff gamma}
  \end{align}

  \begin{claim}\label{cl:ZZ_i correct}
    Let $Z$ be a leaf-set in $\Net$,
    let $Z':=Z\cup\{\ell\}$, and
    let $i:=\operatorname{sgn}(|Z \cap P|)$.
    Then, $Z$ is a solution for $\Inst$ if and only if $Z'$ is a solution for $\Inst_i$.
  \end{claim}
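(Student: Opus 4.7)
The plan is to establish two preservation statements: cost preservation, $c(Z)\leq k \iff c_i(Z')\leq k-i$, and score preservation, $\NetPD^p(Z) = \NetPD^{p_i}(Z')$. Their combination yields the claimed biconditional. As setup, following the convention stated just before the claim, I would assume without loss of generality that $Z$ is maximal (so $Q\subseteq Z$, by monotonicity of $\NetPD^p$) and, in the case $i=1$, further invoke \Cref{lem:greedy+branching} to replace $Z$ by an equally costly leaf set of at least the same score that contains $a$. Under these assumptions, cost preservation is immediate: $c_i$ agrees with $c$ on $X$ except that $c_1(a)=0$, and $c_i(\ell)=0$; hence $c_0(Z')=c(Z)$ and $c_1(Z')=c(Z)-1$.

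For score preservation I would compare $\NetPD^p(Z)$ with $\NetPD^{p_i}(Z')$ edge by edge, partitioning the edge set of $\Net$ as $E_{\mathrm{above}}\cup\{rx\}\cup E_x$ (where $E_{\mathrm{above}}:=E\setminus E_r$) and the edge set of $\Net_i$ as the parallel partition $E_{\mathrm{above}}\cup\{r\ell,\rho x\}\cup E_x$. On $E_x$, both $\gam$-values reduce to the explicit product from \Cref{lem:tree score}; these agree because $\ell\notin\off(e)$ and because either $p$ is unchanged (case $i=1$) or the only modified probabilities belong to cost-1 leaves, which are excluded from $Z$ when $i=0$ since $Z\cap P=\emptyset$. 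The edge $r\ell$ contributes zero because $\w_i(r\ell)=0$. For $\rho x$ versus $rx$, \eqref{eq:rho and x} gives $\gam[p_i]{Z'}(\rho x)=\gam{Z}(rx)$, so their weighted contributions coincide via $\w_i(\rho x)=\w(rx)$.

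The delicate remaining case is $E_{\mathrm{above}}$, where the subtree below each edge's endpoint differs structurally between $\Net$ and $\Net_i$ (the $x$-subtree has been relocated directly beneath $\rho$). I would handle this by an upward induction starting at $r$'s outgoing edge, whose base case is precisely \eqref{eq:rho and x}: $\gam[p_i]{Z'}(r\ell)=\gam{Z}(rx)$. For the inductive step, every ancestor $v$ of $r$ has identical children in $\Net$ and $\Net_i$; the child on the path to $r$ inherits a matching $\gam$ value by the induction hypothesis, and every other child's subtree avoids both $r$ and $x$ and is therefore unchanged, so its $\gam$ value is identical in the two networks. Applying the recurrence of \Cref{def:gamma} at $v$ (tree node or reticulation) then gives matching $\gam$ on the edge entering $v$, and the induction propagates all the way up to $\rho$.

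Summing the matched per-edge contributions yields $\NetPD^p(Z)=\NetPD^{p_i}(Z')$, and combined with cost preservation this proves the biconditional. The main obstacle I anticipate is precisely the upward-induction step: one must carefully observe that $\gam$ depends only on the subtree reachable downward from an edge's endpoint, so the newly added edge $\rho x$ (attached at the root) does not pollute the $\gam$ of any edge in $E_{\mathrm{above}}$, while the only change visible to such edges is the replacement of $\gam{Z}(rx)$ at $r$'s outgoing edge by the specifically-designed $\gam[p_i]{Z'}(r\ell)$, which equals it by \eqref{eq:rho and x}.
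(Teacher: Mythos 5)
Your proposal is correct and follows essentially the same route as the paper's proof: the same edge partition (edges above $r$, the relocated edge $rx$ versus $\rho x$ plus the zero-weight $r\ell$, and the edges below $x$ handled via \Cref{lem:tree score}), the same use of \eqref{eq:rho and x} to match $\gam{Z}(rx)$ with $\gam[p_i]{Z'}(r\ell)$, and the same cost argument invoking \Cref{lem:greedy+branching} to assume $a\in Z$ when $i=1$. The only difference is that you spell out explicitly the upward induction that the paper compresses into "these values only depend on the values of the edges below $e$."
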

  \begin{claimproof}
    Note that $Z \cap P = \emptyset \iff i=0$ is satisfied.
    In the following, we compare the value of $Z$ in $\Inst$ and
    the value of $Z'$ in $\Inst_i$.

    First, consider any arc~$e$ in $\Net$ that is not below~$r$.
    Since, by \eqref{eq:rho and x}, we have $\gam{Z}(rx)=\gam[p_i]{Z'}(r\ell)$,
    and since $p(\ell') = p_0(\ell')$ for any leaf $\ell' \neq \ell$,
    we inductively infer that $\gam[p_i]{Z'}(e)=\gam{Z}(e)$
    as these values only depend on the edges below~$e$.

    Second, by \eqref{eq:diff gamma},
    the contribution of the arc~$rx$ to the value of the solution~$Z$ for $\Inst$ equals
    the contribution of~$\rho{}x$ and~$r\ell$ to the value of the solution~$Z'$ for $\Inst_i$.
    
    It remains to compare the contributions of the arcs~$e$ below $x$ in $\Net$.
    In the following, consider such an arc~$e$.
    If $i=0$, then $Z$~avoids~$P$ and so does~$Z'$, so~$p(\ell')=p_0(\ell')$ for all $\ell'\in\off(e)\cap Z = \off(e)\cap Z'$.
    If $i=1$, then $p(\ell')=p_1(\ell')$ for all leaves in~$\off(e)$.
    Thus, by \Cref{lem:tree score},
    \begin{align*}
      \gam[p_i]{Z'}(e)
      = 1-\sprod{\ell'\in\off(e)\cap Z'}(1-p_i(\ell')) 
      = 1-\sprod{\ell'\in\off(e)\cap Z}(1-p(\ell'))
      = \gam{Z}(e).
    \end{align*}
    Thus, we conclude that $Z$ and $Z'$ score exactly the same in $\Inst$ and $\Inst'$, respectively.
    
    Finally, we show that $c(Z)=c_i(Z')-i$.
    If~$i=0$, then this holds since~$c_0(\ell)=0$.
    If~$i=1$ then $Z$ intersects~$P$ and, by \Cref{lem:greedy+branching}, we can assume that $Z$ contains~$a$.
    Then, since~$c(a)=1$ and~$c_1(\ell)=c_1(a)=0$, we have $c_1(Z')=c(Z)-1$.
  \end{claimproof}

  Now, we can prove the promised equivalence.
  First, if $Z$ is a solution for $\Inst$, then $Z':=Z\cup\{\ell\}$ is a solution
  for $\Inst_i$ with $i=\operatorname{sgn}(|Z\cap P|)$.
  Second, if $Z_0$ is a solution for $\Inst_0$, then $Z_0':=Z_0\setminus P$ is also a solution for $\Inst_0$
  since $p_0(\ell')=0$ for all $\ell'\in P$ and, by \Cref{cl:ZZ_i correct}, $Z:=Z_0'\setminus\{\ell\}$ is a solution for $\Inst$.
  Third, if $Z_1$ is a solution for $\Inst_1$ then we can assume $a\in Z_1$ since $c_1(a)=0$ so, for $Z:=Z_1\setminus\{\ell\}$,
  we have $Z\cap P\ne\emptyset$, thereby satisfying the conditions of \Cref{cl:ZZ_i correct}.
  Thus, $Z$ is a solution for $\Inst$.
\end{proof}

We can now solve \TPD as follows.
If $k = 0$, then the monotonicity of $\NetPD^p(Z)$ in $Z$ implies that ``taking'' all cost-0 leaves in~$\Net$ is optimal.
Otherwise, we repeatedly find a lowest reticulation~$r$ in $\Net$,
apply all reduction rules, and if $r$ survives, branch into two instances using \Cref{br:main}.
Note that, in each new instance, $r$~has a leaf child with cost~$0$.
Thus, \Cref{rr:trivial reti} will apply and remove~$r$ before another branching occurs. 
If no branching or reduction rules apply, then $\Net$ is a tree.
In this tree, a slight variation of \Cref{rr:partial sol} can be used to remove all cost-0 leaves,
so all remaining leaves have cost~1 and, therefore (by convention), inheritance probability~1.
Such an instance can be solved in $\Oh(n \log k)$~time~\cite{PDinSeconds}.
%
Note that the budget~$k$ is decreased for one of the two branches and $|R|$ is reduced in each branch, so no more than~$\binom{|R|}{k}$ branches need to be explored.
Finally, with 
careful bookkeeping
the reduction and branching can be implemented to run in~$\Oh(|E|) = \Oh(n+r)$ amortized time in total.

\begin{theorem}\label{thm:branching}
  On binary, $n$-leaf networks with~$r$ reticulations,
  \TPD and \MaxNPD can be solved in 
  $\Oh(\sum_{i=0}^{\min\{k,r\}}\binom{r}{i}\cdot \log k\cdot (n+r))\subseteq \Oh(2^r\cdot \log k\cdot (n+r))$~time,
  where $k$ is the budget.\footnote{Note that this running time degenerates to $o(2^r\cdot n)$ if $k\leq r/3$}
\end{theorem}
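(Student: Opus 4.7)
The plan is to combine a search-tree analysis of the branching scheme with a careful amortised cost analysis for the reductions, and then invoke the linear-time tree algorithm of \cite{PDinSeconds} at the leaves of the search tree. Correctness is already established by the correctness of each reduction rule and of \Cref{br:main}, so the main remaining task is to bound the total work.

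First, I would bound the size of the search tree. The key observation is that whenever \Cref{br:main} fires on a lowest reticulation~$r$, both produced instances~$\Inst_0$ and~$\Inst_1$ have the property that~$r$'s unique child is a cost-0 leaf (either the newly inserted~$\ell$ after removing the original subtree via \Cref{rr:partial sol} in the reduced form, or directly by construction). Hence \Cref{rr:trivial reti} applies immediately and eliminates~$r$ before any further branching, so the number of reticulations strictly decreases in \emph{both} children of every internal node of the search tree. In the ``$1$-branch'' we additionally decrement~$k$. Consequently every root-to-leaf path in the search tree has length at most~$r$ and uses the $1$-branch at most $\min\{k,r\}$ times. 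The number of such paths is bounded by the number of binary strings of length at most~$r$ with at most $\min\{k,r\}$ ones, namely $\sum_{i=0}^{\min\{k,r\}}\binom{r}{i}$, which is at most $2^r$.

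Next, I would bound the work done at each internal node of the search tree by $\Oh(n+r)$, and the work at each leaf (where the instance has become a tree) by $\Oh((n+r)\log k)$. For the internal-node bound, I would describe an implementation that maintains, for each reticulation, pointers to its lowest ancestors and the current subtree hanging below, together with precomputed values of $\gam[]{Q}$ on the cost-0 leaves below the current lowest reticulation. Each of \Cref{rr:deg2,rr:zero p,rr:trivial reti,rr:partial sol} then touches only the edges of the subtree currently being collapsed, and standard amortisation charges the cost of a reduction to the vertices or edges it destroys; since each edge is destroyed at most once over the life of a branch, the total reduction work on any root-to-leaf path sums to $\Oh(|E|)=\Oh(n+r)$. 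Finding a heaviest cost-1 leaf in \Cref{br:main} is also a single pass over the current subtree below~$r$, amortising into the same bound. For the leaves of the search tree, the remaining instance is a tree; one more pass (a variant of \Cref{rr:partial sol}) removes all cost-0 leaves, and the $\Oh(n\log k)$ algorithm of \cite{PDinSeconds} solves the residual unit-cost tree instance.

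Multiplying the search-tree size by the per-path work gives the claimed bound $\Oh(\sum_{i=0}^{\min\{k,r\}}\binom{r}{i}\cdot \log k\cdot(n+r))\subseteq \Oh(2^r\cdot \log k\cdot(n+r))$. The \MaxNPD case is recovered by setting $c(\ell)=1$ for every leaf~$\ell$ and $p(\ell)=1$ accordingly, yielding the same bound. The main obstacle I expect is the amortised bookkeeping for the reductions: naive implementations of \Cref{rr:partial sol} re-evaluate $\gam{Q}$ on all edges below~$r$ at every branching step, which could blow up the per-branch cost; I would therefore need to describe explicitly how the $\gam{Q}$ values are maintained incrementally as reticulations are processed bottom-up, so that the total reduction cost along any root-to-leaf branch telescopes to $\Oh(n+r)$ rather than $\Oh((n+r)\cdot r)$.
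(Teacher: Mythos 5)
Your proposal follows essentially the same route as the paper: bound the search tree by $\sum_{i=0}^{\min\{k,r\}}\binom{r}{i}$ using the fact that \Cref{rr:trivial reti} eliminates the branched-on reticulation in both children while the $1$-branch also decrements $k$, charge the reduction work along each branch to destroyed edges for an amortised $\Oh(n+r)$ bound, and solve the residual tree instance with the $\Oh(n\log k)$ algorithm of~\cite{PDinSeconds}. The incremental maintenance of the $\gam{Q}$ values that you flag as the main obstacle is exactly what the paper subsumes under ``careful bookkeeping,'' so your argument is correct and matches the paper's proof.
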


\Cref{thm:branching} shows that \MaxNPD is fixed-parameter tractable with respect to the number of reticulations.
In light of this, one might expect that \MaxNPD is also fixed-parameter tractable with respect to the ``level''
(maximum number of reticulations in any biconnected component (``blob'') of the network,
since many tractability results for the reticulation number also extend to the level
by applying the algorithm separately to each blob, with minimal adjustment,
in such a way that the problem parameterized by level reduces to the problem parameterized by reticulation number.
Unfortunately, this approach does not work for \MaxNPD -- for a given blob,
it may be better to pay some diversity score within the blob in order to increase~$\gam{Z}(e)$ for the incoming edge of that blob.
This trade-off means that we need to consider many possible solutions for each blob.
Indeed, we will see in the next section that \MaxNPD is NP-hard even on level-1 networks.



\section{NP-hardness Results}\label{sec:reduction}

Complementing the positive result of the previous section, we now show that \MaxNPD is \NP-hard on level-1 networks, answering an open question in the literature~\cite[Section~9]{bordewich2022complexity}.
On our way to showing this hardness result,
we also show \NP-hardness of the following problem,
answering an open question of \citet{komusiewicz2023multivariate}:

\medskip
  \begin{fbox}{
\parbox{0.9\textwidth}{
  \ucNAP\\
  \textbf{Input}: A tree $\Tree=(V,E)$ with leaves~$L$, edge weights~$\omega:E\to\mathbb{N}$, success probabilities~$p:L\to[0,1]$, and some~$k,D\in\mathbb{N}$.\\
  \textbf{Question}: Is there some $Z\subseteq L$ with $|Z|\leq k$ and $\sum_{e\in E}\gamma'_Z(e)\cdot\omega(e)\geq D$, where $\gamma'_Z(e) := (1 - \prod_{x \in \off(e) \cap Z}(1-p(x)))$?
  }}
  \end{fbox}

\medskip\noindent
Note that $\gamma'_Z(e)$ corresponds to the probability that at least one taxa in $\off(e)$ survives, under the assumption that every taxon~$x \in Z$ survives independently with probability $p(x)$, and every taxon~$x \in L \setminus Z$ does not survive.
Thus, \ucNAP can be viewed as the problem of maximizing the expected phylogenetic diversity on a tree, where each species we choose to save has a certain probability of surviving.


\mysubparagraph{Subset Product.} 
First, we show that the following problem is \NP-hard.

\begin{fbox}{
\parbox{0.9\textwidth}{
  \textsc{\SubProd}\\
  \textbf{Input}: A multiset of positive integers $\{v_1,v_2,\dots, v_m\}$, integers $M,k\in\mathbb{N}$.\\
  \textbf{Question}: Is there any $S \subseteq [m]$ with $|S|=k$ such that $\prod_{i \in S}v_i = M$?
  }}
  \end{fbox}

\medskip\noindent
We note that the definition of \SubProd is slightly different here from the formulation of \citet{GareyJohnson79}. In particular, we assume that the size~$k$ of the set~$S$ is given and that all integers are positive. This makes the subsequent \NP-hardness reductions in this paper slightly simpler.

The \NP-hardness of \SubProd is not a new result. It was stated by~\citet{GareyJohnson79} without full proof (the authors indicate that the problem is \NP-hard by reduction from \XTClong (\XTC), citing ``Yao, private communication'') and a full proof appears in~\cite{Moret97} and we reproved it for our slightly adapted variant in the appendix.

\begin{lemma}[\cite{Moret97}]\label{lem:XTCtoSubProb}
  \XTC reduces to \SubProd in polynomial time.
\end{lemma}

\noindent
As \XTC is \NP-hard~\cite{GareyJohnson79}, so is \SubProd.


\mysubparagraph{Penalty Sum.}
\citet[Theorems~5.3~\&~5.4]{komusiewicz2023multivariate} showed that,
if the following problem is \NP-hard, then so is \ucNAP:

\begin{fbox}{
  \parbox{0.9\textwidth}{
  \PenSum\\
  \textbf{Input}: A set of tuples $\{t_i = (a_i,b_i) \mid i \in [m], a_i \in \mathbb{Q}_+\cup \{0\}$, $b_i \in \mathbb(0,1)\}$, integers~$k$,~$Q$, and a number $D \in \mathbb{Q}_+$.\\
  \textbf{Question}: Is there some $S\subseteq [m]$ with $|S| = k$ such that  $\sum_{i\in S}a_i - Q\cdot \prod_{i\in S}b_i\geq D$?
  }}
\end{fbox}

\medskip\noindent
\iflnbi
In the following, we reduce \SubProd to \PenSum, to prove the \NP-hardness of \PenSum and \ucNAP.
Afterward, we show that, even on level-1 networks, \MaxNPD is \NP-hard by a reduction from \ucNAP.
\else
We set out to show \PenSum~\NP-hard by reducing \SubProd to it.
To communicate the main ideas of this reduction, we first describe a simple transformation that turns an instance of \SubProd into an equivalent `instance' of \PenSum, but one in which the numbers involved are irrational (and as such, cannot be produced in polynomial time). 
We then show how this transformation can be turned into a polynomial-time reduction by replacing the irrational numbers with suitably chosen rationals.


Finally, we reduce \ucNAP to \MaxNPD on level-1 networks,
showing that this restriction of \MaxNPD is also \NP-hard.
\fi

%
%
%
%

\subsection{Hardness of Penalty Sum}\label{sec:reduction-PenSum}
\iflnbi
The full proof of the \NP-hardness of \PenSum is given in \Cref{sec:reduction-PenSum-APX};
Here, we give a brief overview of the main ideas.
For an instance $(\{v_1,\dots v_m\}, M, k)$ of \SubProd, we let $Q := M$ and   $D:= \ln(1/M) - 1$, and let $t_i := (\ln(1/v_i), 1/v_i)$ for each $i \in [n]$. 
Then, in the instance $(\{t_i | i \in [m]\},k,Q,D)$ of \PenSum, the aim is to find $S\subseteq [m]$ with $|S|=k$ optimizing $\sum_{i \in S} \ln(1/v_i) - M \cdot \prod_{i \in S} (1/v_i) = \ln(\prod_{i \in S} 1/v_i) - M \cdot (\prod_{i \in S}1/v_i)$.
This value maximizes in $\ln(1/M) - 1 = D$, with equality if and only if $\prod_{i \in S} v_i = M$.
Thus, $(\{t_i | i \in [m]\},k,Q,D)$ is a \yes-instance of \PenSum if and only if  $(\{v_1,\dots v_m\}, M, k)$ is a \yes-instance of \SubProd.
The full reduction requires additional work in order to ensure that all numbers involved are non-negative rationals.
\else

The reduction from \SubProd to \PenSum can be informally described as follows:
For an instance $(\{v_1,\dots v_m\}, M, k')$ of \SubProd and a big integer $A$, we let $a_i$ be (a rational close to) $A - \ln v_i$ and  let $b_i := 1/v_i$, for each $i \in [m]$. Let $Q := M$, let $k := k'$, and let $D$ be (a rational close to) $ kA - \ln Q - 1$.

Note that we cannot set $a_i := A - \ln v_i$ or $D := kA - \ln Q - 1$ exactly,
because in general these numbers are irrational and cannot be calculated exactly in finite time (nor stored in finite space).
Towards showing the correctness of the reduction, we temporarily forget about the need for rational numbers, and consider how the function $\sum_{i\in S}a_i - Q\cdot \prod_{i\in S}b_i$ behaves when we drop the `(a rational close to)' qualifiers from the descriptions above. In particular we will show that the function reaches its theoretical maximum exactly when $S$ is a solution to the \SubProd instance.

\subsubsection{Reduction with irrational numbers}

\begin{construction}\label{cons:irrational reduct}
  Let $(\{v_1,\dots v_m\}, M, k)$ be an instance of \SubProd.
  Let us define the following (not necessarily rational) numbers.
  \begin{itemize}
    \item Let $A := \lceil\max_{i \in [m]} (\ln v_i)\rceil+1$;
    \item Let $a_i^* := A - \ln v_i$ for each $i\in [m]$;
    \item Let $b_i := 1/v_i$ for each $i \in [m]$;
    \item Let $Q := M$;
    \item Let $D^*: = kA - \ln Q - 1$.
  \end{itemize}
  Finally, output the instance $(\{(a^*_i,b_i): i \in [m]\}, k, Q, D^*)$ of \PenSum.
\end{construction}
\medskip
We note that the purpose of $A$ in \Cref{cons:irrational reduct} is simply to ensure that $a_i^* > 0$ for each $i \in [m]$,
as required by the formulation of \PenSum.
Now, let $f^*:\binom{[m]}{k} \rightarrow \mathbb{R}$ be defined by
\begin{align*}
	f^*(S) := \sum_{i\in S}a^*_i - Q\cdot \prod_{i\in S}b_i.
\end{align*}

\begin{lemma}\label{lem:PenSumIrrationalMax}
  For any $S \in \binom{[m]}{k}$:
  \begin{enumerate}
    \item $f^*(S) \leq D^*$, 	and
    \item $f^*(S) = D^*$ if and only if $\prod_{i\in S} v_i = Q$.
  \end{enumerate}
\end{lemma}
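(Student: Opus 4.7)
The plan is to rewrite $f^*(S)$ in closed form by collapsing the sum and product over $S$ using the definitions of $a_i^*$ and $b_i$, and then to compare the result against $D^*$ via a single well-known elementary inequality.

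First, I would substitute the definitions of $a_i^*$ and $b_i$ into $f^*(S)$. Setting $P := \prod_{i \in S} v_i$, we obtain
\begin{align*}
  f^*(S)
  = \sum_{i \in S}(A - \ln v_i) - Q\cdot \prod_{i \in S}\tfrac{1}{v_i}
  = kA - \ln P - \tfrac{Q}{P}.
\end{align*}
Subtracting $D^* = kA - \ln Q - 1$ gives the clean expression
\begin{align*}
  f^*(S) - D^* = \ln Q - \ln P + 1 - \tfrac{Q}{P} = \ln\!\left(\tfrac{Q}{P}\right) - \tfrac{Q}{P} + 1.
\end{align*}

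Next, I would invoke the standard inequality $\ln t \leq t - 1$, valid for all $t > 0$ with equality if and only if $t = 1$. Applying this with $t := Q/P$ (which is a well-defined positive rational, since the $v_i$ and $Q=M$ are positive integers) immediately yields $f^*(S) - D^* \leq 0$, proving part~(1). Equality holds precisely when $Q/P = 1$, i.e.\ when $\prod_{i \in S} v_i = P = Q$, which is exactly the condition in part~(2).

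There is no real obstacle in this argument: the only non-trivial ingredient is the inequality $\ln t \leq t - 1$, and the constants $A$ and $-1$ in $D^*$ are precisely the additive offsets needed to make the bound tight at $P = Q$. (The role of $A$ is only to ensure $a_i^* > 0$; it contributes $kA$ symmetrically to both sides and cancels out.) A brief remark can be included to highlight that the $-1$ in $D^*$ is what makes the inequality tight, since it equals $1 - \ln 1 = 1 - Q/Q$ at the optimum, which is the motivation for choosing $D^*$ as defined.
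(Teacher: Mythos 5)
Your proof is correct, and it takes a genuinely shorter route than the paper's, though the two rest on the same underlying fact. Both arguments begin identically, collapsing $f^*(S)$ to $kA - \ln P - Q/P$ with $P := \prod_{i\in S} v_i$, so that everything reduces to a one-variable comparison of $P$ against $Q$. The paper then studies $g^*(x) = kA - \ln x - Qx^{-1}$ as a function of a real variable, computes $\frac{dg^*}{dx} = -x^{-1} + Qx^{-2}$, and shows $g^*$ is strictly increasing on $(0,Q)$ and strictly decreasing on $(Q,\infty)$, so its unique maximum is $g^*(Q) = D^*$; both parts of the lemma then follow from this strict monotonicity. You instead subtract $D^*$ and apply the elementary inequality $\ln t \leq t-1$ at $t = Q/P$, which delivers part (1) immediately and part (2) from the equality case $t=1$. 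The two are mathematically equivalent --- the inequality $\ln t \leq t-1$ is precisely the statement that $g^*$ attains its maximum at $x = Q$ --- but your version avoids the derivative analysis and makes transparent why the constant $-1$ sits in $D^*$. The one point worth making explicit is that part (2) needs the full equality characterization of $\ln t \leq t-1$ (equality \emph{only} at $t=1$), which you should justify, e.g.\ by strict concavity of $\ln$ or by the same sign analysis the paper performs; with that sentence added, your argument is complete. Incidentally, the paper's calculus-style analysis is reused in essentially the same form in the appendix proof of \Cref{lem:logDifferenceBound}, so the authors' choice doubles as a warm-up for that argument, but for the present lemma your route suffices.
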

\begin{proof}
  First, observe that given $|S| = k$, the function~$f^*$ can be written as 
	\begin{align*}
		f^*(S) = kA - \sum_{i \in S}\ln v_i - Q/\prod_{i\in S}v_i
		= kA - \ln \left(\prod_{i\in S}v_i\right) - Q/\prod_{i\in S}v_i\
	\end{align*}
	Letting $x_S := \prod_{i\in S}v_i$, we therefore have $f^*(S) = kA - \ln x_S - Qx_S^{-1}$.
	Let $g^*:\mathbb{R}_{>0} \to \mathbb{R}$ be defined by $g^*(x) := kA - \ln x - Qx^{-1}$
	and note that $f^*(S) = g^*(x_S)$ for any $S \in \binom{[m]}{k}$. 
	Recall that $g^*(x)$ has a critical point at $x'$ when $\frac{dg^*}{dx}(x') = 0$.
	Since $\frac{dg^*}{dx} = -x^{-1} + Qx^{-2}$, this occurs exactly when ${x'}^{-1} = Q{x'}^{-2}$, i.e. when $x' = Q$.
	Moreover, for $Q > x > 0$, we have $Qx^{-1} > 1$, implying
	\begin{align*}
	   \frac{dg^*}{dx} = -x^{-1} + Qx^{-2}  >  -x^{-1} + x^{-1} = 0.
	\end{align*}
	On the other hand, for $x > Q > 0$, we have $Qx^{-1} < 1$, implying
	\begin{align*}
	   \frac{dg^*}{dx} = -x^{-1} + Qx^{-2}  <  -x^{-1} + x^{-1} = 0.
	\end{align*}
	It follows that $g^*(x)$ is strictly increasing on the range $0 < x < Q$ and
    strictly decreasing on the range $x > Q$.
	Thus, $g^*(x)$ has a unique maximum on the range $x > 0$, and this maximum is achieved at $x = Q$.
	In particular, for all $S \in \binom{[m]}{k}$, we have	
	\begin{equation}
		f^*(S) = g^*(x_S) \leq g^*(Q) = kA - \ln Q - 1 = D^*.
	\end{equation}
	With equality if and only if $x_S = \prod_{i\in S}v_i = Q$.		
\end{proof}

The above result implies that, abusing terminology slightly, $(\{(a_i^*,b_i) \mid i \in [m]\}, k, Q, D^*)$ is a \yes-instance of `\PenSum' if and only if  $(\{v_1,\dots v_m\}, M, k')$ is a \yes-instance of \SubProd.

We are now ready to fully describe the polynomial-time reduction from \SubProd to \PenSum, showing how we can adapt the ideas above to work for rational $a_i$ and $D$.

\subsubsection{Reduction with rational numbers}

Let $(\{v_1,\dots v_m\}, M, k')$ be an instance of \SubProd, and
let $a_i^*, b_i, Q, k, D^*$ be defined as previously.
Then by \Cref{lem:PenSumIrrationalMax},
$f^*(S) = \sum_{i\in S}a^*_i - Q\cdot \prod_{i\in S}b_i \geq D^*$ if and only if $\prod_{i \in S}v_i = Q = M$
for any $S \in \binom{[m]}{k}$.

Our task now is to show how to replace $a_i^*$ and $D^*$ with rationals $a_i$ and $D$, in such a way that the same property holds (i.e. that $\sum_{i\in S} a_i - Q\cdot \prod_{i\in S} b_i \geq D$ if and only if $\prod_{i \in S} v_i = M$), and such that the instance  $(\{(a_i,b_i) \mid i \in [m]\}, k, Q, D)$ can be constructed in polynomial time.
The key idea is to find rational numbers that can be encoded in polynomially many bits but that are close enough to their respective irrationals that the difference between $f^*(S)$ and $f(S)$ (and between $D^*$ and $D$) is guaranteed to be small.
To this end, let us fix a positive integer $H$ to be defined later, and we will require all $a_i,b_i, D$ to be a multiple of $2^{-H}$. This ensures that the denominator part of any of these rationals can be encoded using $\Oh(H)$ bits.

\looseness=-1
Given any $x \in \mathbb{R}$ and a positive integer~$H$,
let $\floorH{x} : = r_x/2^H$, where $r_x$ is the largest integer such that~$r_x/2^H \leq x$.
For example $\floorvar{3}{\pi} = 3.125 = 25/2^3$, because $25/2^3 < \pi < 26/2^3$
(one may think of $\floorH{x}$ as the number derived from the binary representation of $x$ by deleting all digits more than $H$ positions after the binary point. Thus, as the binary expression of $\pi$ begins \texttt{11.00100 10000 11111}$\dots$, the binary expression of $\floorvar{3}{\pi}$ is \texttt{11.001}).
Similarly, let~$\ceilH{x}: = s_x/2^H$, where $s_x$ is the smallest integer such that $x \leq s_x/2^H$.
Finally, let~$\delta:= 1/{2^H}$.

\begin{observation}\label{obs:roundBounds}
	Let $x \in \mathbb{R}$.
    Then, $x- \delta   <   \floorH{x}  \leq x  \leq \ceilH{x} < x + \delta$.
\end{observation}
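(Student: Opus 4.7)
The plan is to prove the four inequalities in order, extracting the inner ones directly from the definitions and the outer ones from the extremality of $r_x$ and $s_x$.

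First, I would note that $\floorH{x} \leq x$ and $x \leq \ceilH{x}$ are immediate from the definitions of $r_x$ and $s_x$: indeed $r_x/2^H \leq x$ by the defining property of $r_x$, and $x \leq s_x/2^H$ by the defining property of $s_x$.

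For the outer bound $\floorH{x} > x - \delta$, I would argue by contradiction. Suppose $r_x/2^H \leq x - 1/2^H$. Multiplying through by $2^H$ and rearranging gives $r_x + 1 \leq x \cdot 2^H$, i.e.\ $(r_x+1)/2^H \leq x$. But then $r_x + 1$ is an integer with $(r_x+1)/2^H \leq x$, contradicting the maximality of $r_x$. The symmetric argument handles $\ceilH{x} < x + \delta$: if $s_x/2^H \geq x + 1/2^H$, then $(s_x - 1)/2^H \geq x$, contradicting the minimality of $s_x$.

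There is no real obstacle here; the statement is purely a bookkeeping fact about rounding to the nearest multiple of $2^{-H}$, and each inequality reduces to one line. In the writeup I would present it as a short chain of four inequalities with a parenthetical justification for each.
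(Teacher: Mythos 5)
Your proof is correct and matches the intended justification: the paper states \Cref{obs:roundBounds} without proof precisely because it follows directly from the extremal definitions of $r_x$ and $s_x$, which is exactly the argument you give. The inner inequalities are definitional and your contradiction step for the outer bounds (that $r_x+1$, resp.\ $s_x-1$, would otherwise violate maximality, resp.\ minimality) is the standard one-line verification; nothing further is needed.
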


We can now describe the reduction from \SubProd to \PenSum.

\begin{construction}\label{cons:rational reduct}
  Let $(\{v_1,\dots v_m\}, M, k)$ be an instance of \SubProd.
  \begin{itemize}
    \item Let $A := \lceil\max_{i \in [m]} (\ln v_i)\rceil+1$;
    \item Let $a_i := \ceilH{a_i^*} = \ceilH{A - \ln v_i}$ for each $i\in [m]$;
    \item Let $b_i := 1/v_i$ for each $i \in [m]$;
    \item Let $Q := M$;
    \item Let $D := \floorH{D^*}  = \floorH{kA - \ln Q - 1}$.
  \end{itemize}
  Finally, output the instance $\Inst:=(\{(a_i,b_i) \mid i \in [m]\}, k, Q, D)$ of \PenSum.
\end{construction}
\medskip
In the following, we show that the two instances are equivalent.
To this end, let $f:\binom{[m]}{k} \rightarrow \mathbb{R}$ be defined by
\begin{align*}
	f(S) := \sum_{i\in S}a_i - Q\cdot \prod_{i\in S}b_i.
\end{align*}
Note that $f$ is the same as the function $f^*$ defined previously, but with each $a_i^*$ replaced by~$a_i$.
Then, $\Inst$ is a \yes-instance of \PenSum
if and only if
there is some $S \in \binom{[m]}{k}$ such that $f(S) \geq D$.
%
The next lemma shows the close relation between $f^*$ and $f$ (and between $D^*$ and $D$), which will be used in both directions to show the equivalence between \yes-instances of \SubProd and \PenSum.

\begin{lemma}\label{lem:fDbounds}
  Let $S \in \binom{[m]}{k}$. Then,
	$f^*(S) \leq f(S) < f^*(S) + k\delta$
	and
	$D^* - \delta < D \leq D^*.$
\end{lemma}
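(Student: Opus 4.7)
The plan is to apply \Cref{obs:roundBounds} term-by-term, then sum for the $f$ bound and read off the $D$ bound directly.

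First I would handle the $D$ side, which is essentially immediate. Since $D := \floorH{D^*}$, \Cref{obs:roundBounds} gives $D^* - \delta < \floorH{D^*} \leq D^*$, which is precisely $D^* - \delta < D \leq D^*$. No further work is needed here.

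For the $f$ side, I would observe that the only difference between $f(S)$ and $f^*(S)$ lies in the linear sum, since the product term $Q\cdot\prod_{i\in S} b_i$ is identical in both functions (the $b_i$'s are rational by construction and are unchanged). Thus
\begin{equation*}
  f(S) - f^*(S) = \sum_{i \in S}(a_i - a_i^*).
\end{equation*}
Since $a_i := \ceilH{a_i^*}$, \Cref{obs:roundBounds} yields $a_i^* \leq a_i < a_i^* + \delta$, so each summand lies in $[0,\delta)$. Summing over the $k$ elements of $S$ gives $0 \leq f(S) - f^*(S) < k\delta$, which rearranges to the claimed $f^*(S) \leq f(S) < f^*(S) + k\delta$.

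No real obstacle is expected: the lemma is a bookkeeping statement that exists precisely to quantify the rounding error introduced by \Cref{cons:rational reduct}, and the bound $k\delta$ on the cumulative error of $k$ rounded summands is the best one can hope for without further structural assumptions. The only thing to double-check is that the $b_i$ values do not need any rounding, which is true because each $b_i = 1/v_i$ with $v_i$ a positive integer from the \SubProd instance and so is already rational with polynomially-many-bit encoding.
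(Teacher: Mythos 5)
Your argument is correct and matches the paper's proof: both reduce the first claim to $f(S)-f^*(S)=\sum_{i\in S}(a_i-a_i^*)$ with each summand in $[0,\delta)$ by \Cref{obs:roundBounds}, and both obtain the second claim directly from \Cref{obs:roundBounds} applied to $D=\floorH{D^*}$. No gaps.
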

\begin{proof}
  Observe that $f(S) - f^*(S) = \sum_{i \in S}(a_i - a_i^*)$ and $|S|=k$.
  Then, by \Cref{obs:roundBounds}, we have $0 \leq a_i - a_i^* < \delta$ for all $i \in [m]$.
  Thus, $0 \leq f(S) - f^*(S) < k \delta$, from which the first claim follows.
	The second claim follows immediately from \Cref{obs:roundBounds} and the fact that $D = \floorH{D^*}$.
\end{proof}

\begin{corollary}\label{cor:SubProdYesToPenSumYes}
  Let $S \in \binom{[m]}{k}$ such that $\prod_{i \in S}v_i = Q$.
  Then, $f(S) \geq D$.
\end{corollary}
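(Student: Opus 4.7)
The plan is to combine the two lemmas immediately preceding the corollary. By \Cref{lem:PenSumIrrationalMax}(2), the hypothesis $\prod_{i\in S}v_i = Q$ gives us the exact identity $f^*(S) = D^*$, so we have a handle on the irrational version of the sum. The goal is then to transfer this identity across to the rational versions $f$ and $D$, and the direction of the inequalities in \Cref{lem:fDbounds} will turn out to be exactly what we need.

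Concretely, I would first apply \Cref{lem:fDbounds} to get $f(S) \geq f^*(S)$ (the lower bound from that lemma) and $D \leq D^*$ (the upper bound). Substituting $f^*(S) = D^*$ then yields the chain
\begin{equation*}
  f(S) \;\geq\; f^*(S) \;=\; D^* \;\geq\; D,
\end{equation*}
which is the desired conclusion. This is essentially a one-line argument once the two lemmas are in hand.

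There is no real obstacle here: the design of \Cref{cons:rational reduct} rounds $a_i$ up (to $\ceilH{a_i^*}$) and $D$ down (to $\floorH{D^*}$), and both choices push the inequality in the favourable direction for this implication. The only thing worth emphasising in the write-up is that the slackness introduced by rounding is spent entirely on the easy direction (\yes-instance of \SubProd maps to \yes-instance of \PenSum), while the hard direction—showing that a \yes-instance of \PenSum forces $\prod_{i\in S}v_i = Q$—will need a separate argument that controls how much the rounding can slacken $f$ upward, and that is where the eventual choice of $H$ must be made large enough.
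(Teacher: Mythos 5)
Your argument is exactly the paper's proof: the same chain $D \leq D^* = f^*(S) \leq f(S)$ obtained by combining \Cref{lem:PenSumIrrationalMax}~(2) with the two rounding bounds from \Cref{lem:fDbounds}. It is correct and complete as stated.
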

\begin{proof}
    $D \stackrel{\text{Lem.~\ref{lem:fDbounds}}}{\leq} D^* \stackrel{\text{Lem.~\ref{lem:PenSumIrrationalMax}~(2)}}{=} f^*(S)
	\stackrel{\text{Lem.~\ref{lem:fDbounds}}}{\leq} f(S)$.
\end{proof}

We now have that $(\{v_1,\dots v_m\}, M, k')$ being a \yes-instance of \SubProd implies~$\Inst$ being a \yes-instance of \PenSum.
To show the converse, we show for all $S \in \binom{[m]}{k}$ that $\prod_{i \in S}v_i = Q' \neq Q$ implies $f(S) < D$.
Since $f(S) < f^*(S) + k\delta$ and $D^* - \delta < D$, it is sufficient to show that $f^*(S) + k\delta \leq D^* - \delta$,
that is $(k+1)\delta \leq D^* - f^*(S)$.
To do this, we first establish a lower bound on $D^* - f^*(S')$ in terms of $Q$, using the following technical lemma, whose proof is deferred to the appendix.

\begin{lemma}\label{lem:logDifferenceBound}
	Let $Q,Q' \in \mathbb{N}_+$ with $Q \geq 2$ and $Q\neq Q'$.
    Then, $\ln Q' - \ln Q + Q/Q' - 1 > Q^{-4}$.
\end{lemma}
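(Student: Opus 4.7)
My plan is to recast the inequality as a one-variable convexity problem. Setting $g(y) := y - 1 - \ln y$, observe that the left-hand side is exactly $g(Q/Q')$. The function $g$ is strictly convex on $(0,\infty)$, attains its unique minimum $g(1)=0$, is strictly decreasing on $(0,1]$, and strictly increasing on $[1,\infty)$. Since $Q$ and $Q'$ are distinct positive integers, the argument $Q/Q'$ is integrally separated from $1$, and the whole task reduces to a quantitative lower bound for $g$.

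The main technical step is to prove two quadratic under-approximations to $g$ near its minimum:
\[
g(y) \geq \tfrac{1}{2}(y-1)^2 \text{ for } y\in(0,1], \qquad g(y) \geq \tfrac{(y-1)^2}{2y} \text{ for } y\geq 1.
\]
Both bounds follow by defining the difference between $g$ and the proposed lower bound, observing that it vanishes at $y=1$, and verifying that its derivative factors as a sign-definite expression times $(y-1)^2$ in the corresponding range; monotonicity then gives the claim.

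With these two bounds in hand, I will finish by casing on the relative sizes of $Q$ and $Q'$. If $Q'\leq Q$, then $y:=Q/Q'\geq 1$ and the second bound, combined with $(Q-Q')^2\geq 1$ and $Q'\leq Q$, yields $g(Q/Q')\geq 1/(2Q^2)$, which exceeds $Q^{-4}$ since $Q\geq 2$. If $Q' \geq 2Q$, monotonicity gives $g(Q/Q') \geq g(1/2) = \ln 2 - 1/2 > 1/16 \geq Q^{-4}$. If $Q < Q' < 2Q$, the first bound yields $g(Q/Q')\geq (Q-Q')^2/(2Q'^2) \geq 1/(8Q^2)$, which exceeds $Q^{-4}$ whenever $Q\geq 3$.

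The only obstacle is the boundary case $Q=2$ in the last subcase, where the generic bound $1/(8Q^2)=1/32$ no longer dominates $Q^{-4}=1/16$. Since $Q'=4$ is already covered by the $Q'\geq 2Q$ case, the sole remaining value is $Q'=3$, which I will verify directly via the numerical inequality $\ln(3/2)-1/3 > 1/16$; this is routinely confirmed from a finite truncation of the absolutely convergent series $\ln(3/2)=\sum_{n\geq 1} 1/(n\cdot 3^n)$.
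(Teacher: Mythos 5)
Your proof is correct, and it takes a genuinely different route from the paper's. The paper fixes $Q$ and studies $h_Q(x)=\ln x-\ln Q+Q/x-1$ as a unimodal function of $x$ with minimum at $x=Q$, so that only the adjacent integers $Q'=Q\pm1$ need bounding; it then handles $Q'=Q-1$ by showing $h_Q(Q-1)-Q^{-4}$ is increasing in $Q$ and positive at $Q=2$, and handles $Q'=Q+1$ by truncating the series for $\ln(1+1/Q)$, with $Q=2$ checked by hand. You instead collapse everything to the ratio $y=Q/Q'$ and the function $g(y)=y-1-\ln y$, prove the standard quadratic minorants $g(y)\ge\tfrac12(y-1)^2$ on $(0,1]$ and $g(y)\ge(y-1)^2/(2y)$ on $[1,\infty)$ (both verifications go through: the difference functions vanish at $y=1$ and have derivatives $-(y-1)^2/y$ and $(y-1)^2/(2y^2)$, respectively), and convert the integrality gap $|Q-Q'|\ge 1$ into a lower bound of order $1/Q^2$. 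Your case split ($Q'<Q$; $Q'\ge 2Q$; $Q<Q'<2Q$) is exhaustive, each numerical comparison is valid ($1/(2Q^2)>Q^{-4}$ for $Q\ge 2$; $\ln 2-1/2>1/16$; $1/(8Q^2)\ge Q^{-4}$ for $Q\ge 3$), and the lone boundary case $(Q,Q')=(2,3)$ indeed reduces to $\ln(3/2)>19/48$, which the all-positive series $\ln(3/2)=\sum_{n\ge 1}1/(n\,3^n)$ certifies after three terms ($65/162>19/48$). As for what each approach buys: yours is more modular, reusing textbook bounds on $y-1-\ln y$ and in fact delivering the stronger estimate $\Omega(1/Q^2)$ in all but the boundary case, whereas the paper's unimodality reduction explicitly isolates the extremal instances $Q'=Q\pm 1$, making it transparent where the inequality is tightest.
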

We explicitly note that we use the natural logarithm. For other logarithms, say $log_2$, this lemma is not true. For example for $Q=2$ and $Q'=1$ we have $\log_2(1) - \log_2(2) + 2/1 -1 = 0 - 1 + 2 - 1 = 0 < 2^{-4}$.

\begin{corollary}\label{cor:deltaGap}
	 Suppose $\prod_{i \in S}v_i = Q' \neq Q$ for some $Q\ge 2$ and $S \in \binom{[m]}{k}$. Then $D^* - f^*(S) > Q^{-4}$.
\end{corollary}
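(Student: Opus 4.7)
The plan is to prove this corollary by a direct algebraic manipulation that reduces the claim to \Cref{lem:logDifferenceBound}. First I would recall the key identities from \Cref{cons:irrational reduct} and the proof of \Cref{lem:PenSumIrrationalMax}: we have $D^* = kA - \ln Q - 1$ and, setting $x_S := \prod_{i \in S} v_i = Q'$, we have $f^*(S) = g^*(x_S) = kA - \ln Q' - Q/Q'$.

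Subtracting these two expressions, the $kA$ terms cancel and we obtain
\begin{equation*}
D^* - f^*(S) \;=\; \bigl(kA - \ln Q - 1\bigr) - \bigl(kA - \ln Q' - Q/Q'\bigr) \;=\; \ln Q' - \ln Q + Q/Q' - 1.
\end{equation*}
Since $Q \geq 2$ and $Q' \neq Q$ (and $Q' = \prod_{i\in S} v_i$ is a positive integer because each $v_i \in \mathbb{N}_+$), the hypotheses of \Cref{lem:logDifferenceBound} are satisfied, yielding $\ln Q' - \ln Q + Q/Q' - 1 > Q^{-4}$, which is exactly the desired inequality.

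There is no real obstacle here; the corollary is essentially a reformulation of \Cref{lem:logDifferenceBound} after substituting the closed-form expressions for $D^*$ and $f^*(S)$. The only minor subtlety is to explicitly note that $Q' = \prod_{i \in S} v_i$ is a positive integer (hence lies in $\mathbb{N}_+$), so that \Cref{lem:logDifferenceBound} genuinely applies.
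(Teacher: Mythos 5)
Your proposal is correct and follows essentially the same route as the paper: substitute the closed forms $D^* = kA - \ln Q - 1$ and $f^*(S) = kA - \ln Q' - Q/Q'$, cancel the $kA$ terms, and invoke \Cref{lem:logDifferenceBound}. Your explicit remark that $Q' = \prod_{i\in S} v_i$ is a positive integer is a small but welcome addition of care over the paper's version.
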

\begin{proof}
	Recall that $D^* = kA - \ln Q - 1$ and that $f^*(S) = kA - \ln(\prod_{i \in S}v_i) - Q/(\prod_{i \in S}v_i) = kA - \ln Q' - Q/Q'$.
	Then $D^* - f^*(S) = \ln Q' - \ln Q + Q/Q' - 1$. It follows from \Cref{lem:logDifferenceBound} that $D^* - f^*(S) > Q^{-4}$.
\end{proof}

Given the above we can now fix a suitable value for $H$. Given that we wanted $D^* - f^*(S) \geq (k+1)\delta = \nicefrac{(k+1)}{2^H}$ when $\prod_{i \in S} v_i \neq Q$, and assuming without loss of generality that $k < Q$, it is sufficient to set $H = 5\lceil\log_2 Q\rceil$.

\begin{corollary}\label{cor:deltaGap2}
	Let $H = 5 \lceil\log_2 Q\rceil$ and $\delta = (1/2^H)$.
	Then for $(\{(a_i, b_i) \mid i \in [m]\}, k, Q, D)$ constructed as above, it holds that $Q^{-4} \geq (k+1)\delta$.
\end{corollary}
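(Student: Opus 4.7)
The plan is to prove the inequality $(k+1)\delta \leq Q^{-4}$ by a direct estimate that traces through the definitions of $H$ and $\delta$. First I would rewrite the target as $(k+1) \cdot Q^4 \leq 2^H$, which is the cleanest form once we clear the denominator $2^H$.

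Next I would bound the two sides of the product on the left separately. For the $Q^4$ factor, I would use the fact that $\lceil \log_2 Q \rceil \geq \log_2 Q$ implies $2^{\lceil \log_2 Q \rceil} \geq Q$, and therefore $2^H = 2^{5\lceil \log_2 Q\rceil} \geq Q^5$. For the $k+1$ factor, I would invoke the WLOG assumption $k < Q$ stated immediately before the corollary, which gives $k + 1 \leq Q$. Multiplying these yields $(k+1)\cdot Q^4 \leq Q \cdot Q^4 = Q^5 \leq 2^H$, which is exactly what is needed.

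Finally I would divide through by $2^H$ to recover $(k+1)\delta \leq Q^{-4}$, completing the proof. There is essentially no hard step here: the entire argument is a one-line inequality chain once we expand $H = 5\lceil \log_2 Q\rceil$ and use the trivial bound $2^{\lceil \log_2 Q\rceil} \geq Q$. The only thing worth being careful about is that the $k < Q$ assumption is actually available — but the paragraph preceding the corollary explicitly makes this assumption, so no extra justification is required at this point (any preprocessing needed to ensure $k < Q$ would belong in the discussion of the reduction itself, not in this corollary).
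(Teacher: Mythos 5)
Your proposal is correct and matches the paper's argument essentially exactly: the paper also assumes $k < Q$ without loss of generality and then bounds $(k+1)\delta \leq Q/2^H \leq Q/Q^5 = Q^{-4}$, which is the same chain of inequalities you obtain after clearing the denominator. There is no substantive difference between the two proofs.
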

\begin{proof}
	W.l.o.g.\ we may assume $k < Q$. 
	Then, $(k+1)\delta \leq Q/2^H \leq Q/Q^5 = Q^{-4}$.
\end{proof}
We now have all necessary pieces to reduce \SubProd to \PenSum.

\begin{theorem}
	\label{thm:NP-PenSum}
	\PenSum is \NP-hard.
\end{theorem}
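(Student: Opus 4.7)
The plan is to reduce \SubProd (which is \NP-hard by \Cref{lem:XTCtoSubProb}) to \PenSum in polynomial time, using \Cref{cons:rational reduct}. Given a \SubProd instance $(\{v_1, \dots, v_m\}, M, k)$, the construction outputs the \PenSum instance $\Inst = (\{(a_i, b_i) \mid i \in [m]\}, k, Q, D)$. I need to verify that the construction runs in polynomial time and that it preserves \yes{}- and \no{}-instances.

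Polynomial-time computability is straightforward: since $H = 5\lceil \log_2 Q\rceil$ is polynomial in the size of the input, each $a_i = \ceilH{A - \ln v_i}$ and $D = \floorH{kA - \ln Q - 1}$ is a rational with denominator $2^H$ and encoding of polynomially many bits, and each value can be computed by any standard algorithm that approximates $\ln$ to $H$ bits of precision. The pairs $(a_i, b_i)$ and $Q$ are then read off directly.

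For correctness, the forward direction is immediate from \Cref{cor:SubProdYesToPenSumYes}: if some $S \in \binom{[m]}{k}$ satisfies $\prod_{i \in S} v_i = Q$, then $f(S) \geq D$, so $\Inst$ is a \yes{}-instance. For the backward direction I prove the contrapositive: if $\prod_{i \in S} v_i \neq Q$ for every $S \in \binom{[m]}{k}$, then no such $S$ satisfies $f(S) \geq D$. Chaining \Cref{cor:deltaGap} ($D^* - f^*(S) > Q^{-4}$), \Cref{cor:deltaGap2} ($Q^{-4} \geq (k+1)\delta$), and \Cref{lem:fDbounds} (which yields both $f(S) < f^*(S) + k\delta$ and $D > D^* - \delta$), I obtain $f(S) - D < (k+1)\delta - (D^* - f^*(S)) \leq 0$, i.e.\ $f(S) < D$, as needed.

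The main obstacle is not the core correctness argument, which just stitches together the preceding corollaries, but some modest bookkeeping around degenerate cases: \PenSum formally requires $b_i \in (0,1)$, and \Cref{cor:deltaGap} assumes $Q \geq 2$. These are handled by preprocessing the \SubProd instance: the trivial case $M = 1$ can be solved directly (check whether $k$ of the $v_i$ equal $1$), and any $v_i = 1$ entries can be dealt with by branching on how many of them are included in the selected subset and running the construction on the residual instance whose remaining values satisfy $v_i \geq 2$, ensuring $b_i = 1/v_i \in (0,1)$ as required.
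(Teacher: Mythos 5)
Your proposal is correct and follows essentially the same route as the paper: the same \Cref{cons:rational reduct}, \Cref{cor:SubProdYesToPenSumYes} for the forward direction, and the chain \Cref{lem:fDbounds}, \Cref{cor:deltaGap2}, \Cref{cor:deltaGap} for the backward direction (stated there in non-contrapositive form). Your extra preprocessing for $M=1$ and $v_i=1$ addresses degenerate cases the paper glosses over (they are avoided automatically because the \XTC reduction yields $v_i, M \geq 2$), and is a harmless addition.
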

\begin{proof}
	Given an instance  $(\{v_1,\dots v_m\}, M, k')$ of \SubProd, let $Q:= M$, $H := 5 \lceil\log_2 Q\rceil$, and $\delta:= (1/2^H)$.
	Construct $A, a_i, b_i, k, D$ as described above, that is:
	$A := \lceil\max_{i \in [m]} (\ln v_i)\rceil+1$;
	$a_i := \ceilH{a_i^*} = \ceilH{A - \ln v_i}$ for each $i\in [m]$;
	$b_i := 1/v_i$ for each $i \in [m]$;
	$k := k'$;
	$D := \floorH{D^*}  = \floorH{kA - \ln Q - 1}$. Let $(\{(a_i, b_i) \mid i \in [m]\}, k, Q, D)$ be the resulting instance of \PenSum.

	We first show that $(\{(a_i, b_i)  \mid i \in [m]\}, k, Q, D)$ is a \yes-instance of \PenSum if and only if $(\{v_1,\dots v_m\}, M, k')$ is a \yes-instance of \SubProd.
	Suppose first that $(\{v_1,\dots v_m\}, M, k')$ is a \yes-instance of \SubProd. Then there is some $S \in \binom{[m]}{k}$ such that $\prod_{i \in S} v_i = M = Q$.
	Then by \Cref{cor:SubProdYesToPenSumYes}, $f(S) \geq D$ and so $(\{(a_i, b_i)  \mid i \in [m]\}, k, Q, D)$ is a \yes-instance of \PenSum.
   
	Conversely, suppose that $(\{(a_i, b_i)  \mid i \in [m]\}, k, Q, D)$ is a \yes-instance of \PenSum.
	Then there is some $S \in \binom{[m]}{k}$ such that $f(S) \geq D$.
	By \Cref{lem:fDbounds} and \Cref{cor:deltaGap2}, we have that $f^*(S) > f(S) - k\delta \geq D - k\delta > D^* - (k+1)\delta \geq D^* - Q^{-4}$. Thus $D^* - f^*(S) \leq Q^{-4}$, which by \Cref{cor:deltaGap} implies that $\prod_{i \in S} v_i = Q$, and so $(\{v_1,\dots v_m\}, M, k')$ is a \yes-instance of \SubProd.

	It remains to show that the reduction takes polynomial time.
    For this, it is sufficient to show that the rationals $A,k,D$ and $a_i,b_i$ for $i \in [m]$ can all be calculated in polynomial time.
    Observe that  $A = \lceil\max_{i \in [m]} (\ln v_i)\rceil$ is the unique integer such that $e^A > \max_{i \in [m]} v_i > e^{A-1}$.
    Since $\ln v_i < \log_2 v_i$, we have
    $1\leq A \leq \lceil\max_{i \in [m]} \log_2 v_i\rceil$ and so we can find $A$ in polynomial time by checking all integers in this range.

    For each $i \in [m]$, $a_i = \ceilH{A - \ln v_i} = r_i/2^H$, where $r_i$ is the minimum integer such that $A - \ln v_i \leq r_i/2^H$.
    Thus, we can compute $r_i$ by checking $e^{A - r_i/2^H}\leq v_i$ with $r_i = 2^H\cdot (A - \ceilH{\ln v_i})$,
    setting $r_i$ to its successor if the inequality is not satisfied.
    %
    Thus we can construct $a_i$ in polynomial time, and $a_i$ can be represented in $\Oh(\log_2 r + H)$ bits.
    The construction of $D$ can be handled in a similar way.
 
	For each $i \in [m]$, rational $b_i = 1/v_i$ can be represented in $\Oh(\log_2 v_i)$ bits
    (recall that we represent $1/v_i$ with binary representations of the integers~$1$ and $v_i$)
    and takes $\Oh(\log_2 v_i)$ time to construct. 
	$Q$ and $k$ are taken directly from the instance $(\{v_1,\dots v_m\}, M, k')$.
\end{proof}

\fi

\subsection{Hardness of Network-Diversity}\label{sec:reduction-MaxNPD}
Finally, reducing from \ucNAP, we show the following main result.

\begin{theorem}
	\label{thm:level-1}
	\MaxNPD is \NP-hard even if the input network has level~1, all weights are positive, and the distance between the root and each leaf is~4.
\end{theorem}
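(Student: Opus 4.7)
The plan is a polynomial-time reduction from \ucNAP. The instances produced by the chain \XTC $\to$ \SubProd $\to$ \PenSum $\to$ \ucNAP above can be taken to have a simple ``pitchfork with penalty edge'' shape: a pseudo-root $\rho'$ whose single child $\rho$ is connected by an edge of weight $\omega(\rho'\rho)=Q$, with $\rho$ in turn being the parent of $m$ leaves $\ell_1,\dots,\ell_m$ at distance $2$ from $\rho'$, carrying edge weights $\omega_i:=\omega(\rho\ell_i)$ and success probabilities $p_i$. Such inputs already suffice for \NP-hardness; it remains to convert them into level-1 \MaxNPD instances in which every leaf lies at distance exactly $4$ from the root.

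Given such a \ucNAP instance, I build a level-1 network $\Net$ as follows. The network root $R$ (depth $0$) has two children: a node $\rho_N$ (depth $1$, mirroring $\rho$) and a zero-weight dummy subtree $D$ whose leaves lie at depth $4$. For each \ucNAP leaf $\ell_i$, I install a \emph{diamond gadget} rooted at $\rho_N$: two tree nodes $b_i,c_i$ at depth $2$ are children of $\rho_N$, both serving as parents of a reticulation $r_i$ at depth $3$, whose unique child is a leaf $\ell_i'$ at depth $4$. I set $p(b_ir_i):=p_i$ and $p(c_ir_i):=0$, with all other edges having inheritance $1$. To satisfy the outdegree-$\geq 2$ requirement at $b_i$ and $c_i$, I attach below each a zero-weight binary subtree terminating in dummy leaves at depth $4$. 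The edge weights are $\omega(R\rho_N):=Q$, $\omega(\rho_N b_i):=\omega_i$, and $0$ elsewhere; the budget $k$ and threshold are unchanged.

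Correctness then follows from a direct calculation via \Cref{lem:tree score} applied to each tree-structured subnetwork below a reticulation. For any $Z\subseteq L$, set $Z':=\{\ell_i':\ell_i\in Z\}$; one obtains $\gam{Z'}(b_ir_i)=p_i\cdot[\ell_i\in Z]$, $\gam{Z'}(c_ir_i)=0$, and therefore $\gam{Z'}(\rho_N b_i)=p_i\cdot[\ell_i\in Z]$ (the dummy side contributes $0$), $\gam{Z'}(\rho_N c_i)=0$, and $\gam{Z'}(R\rho_N)=1-\prod_{i\in Z}(1-p_i)$. Summing weighted contributions yields $\NetPD(Z')=Q\cdot\bigl(1-\prod_{i\in Z}(1-p_i)\bigr)+\sum_{i\in Z}p_i\omega_i$, which is exactly the \ucNAP objective on $Z$. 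Since dummy leaves contribute $0$, optimal MaxNPD solutions can be assumed to avoid them, giving a size-preserving bijection with \ucNAP solutions and establishing the equivalence of instances.

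It remains to verify the structural constraints. Each diamond gadget is a $4$-cycle $\rho_N\to b_i\to r_i\to c_i\to\rho_N$ containing the unique reticulation $r_i$, and $\rho_N$ is a cut vertex separating distinct gadgets, so each such cycle is its own biconnected component; every other biconnected component is a single edge. Thus $\Net$ has level $1$, and by construction every leaf lies at distance exactly $4$ from $R$. The main obstacle is engineering the diamond so that the two reticulation-edge probabilities simultaneously (a) channel the factor $p_i$ into $\rho_N b_i$ to produce the linear term $\sum p_i\omega_i$, and (b) combine at $\rho_N$ to yield the exact penalty factor $1-\prod(1-p_i)$ on $R\rho_N$; the remaining work—padding with zero-weight dummy subtrees to enforce outdegree $\geq 2$ and depth exactly $4$ without disturbing the score—is careful but routine bookkeeping.
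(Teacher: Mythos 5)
Your overall strategy (reduce from \ucNAP, replace each leaf by a small one-reticulation gadget whose inheritance probabilities simulate the survival probability) is the same as the paper's, but the construction as written is broken, and the point where it breaks is exactly the part you dismiss as routine bookkeeping. The dummy leaves you attach below $b_i$ and $c_i$ to enforce outdegree~$2$ sit \emph{below} the positively weighted edges $\rho_N b_i$ and $R\rho_N$. Your claim that ``since dummy leaves contribute $0$, optimal MaxNPD solutions can be assumed to avoid them'' is false: selecting a dummy leaf under $b_i$ makes the $\gamma$-value of the $b_i$-side equal to $1$ (all edges on that path are tree edges with probability $1$), hence $\gam{Z'}(\rho_N b_i)=1$ and $\gam{Z'}(R\rho_N)=1$, so that single pick already collects $\omega_i+Q$ --- strictly more than the $p_i\omega_i$ plus partial root credit obtained by picking the real leaf $\ell_i'$. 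Consequently a budget-$k$ set of dummy leaves scores $Q+\sum(\text{top-}k\ \omega_i)$, which exceeds the intended maximum $Q\bigl(1-\prod_{i\in Z}(1-p_i)\bigr)+\sum_{i\in Z}p_i\omega_i$, and \no-instances of \ucNAP get mapped to \yes-instances of \MaxNPD. The paper's gadget faces the same danger (its extra leaf $\ell^-$) and neutralizes it with a mechanism you have no analogue of: a huge weight $QM^2$ on the pendant edge of each ``real'' leaf forces any solution to spend its entire budget on those leaves, and the residual slack of up to $3/Q$ is then killed by a divisibility/granularity argument on the denominators of the probabilities. Without something of this kind your backward direction fails.

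There are further gaps you would have to close even after fixing that. First, the paper's preliminaries require edge weights to be positive integers, so your zero-weight padding edges are not legal input; replacing them by weight~$1$ destroys the exact score identity and forces precisely the scaling-plus-slack argument above. Second, you assume the hard \ucNAP instances have the ``pitchfork'' shape; the paper only has \NP-hardness on arbitrary height-$2$ trees (via \citet{komusiewicz2023multivariate}), and its gadget works leaf-by-leaf on any such tree, whereas your construction needs the special shape and you give no justification for it. Third, under the paper's formal definition of level (blobs delimited by cut-arcs), all your diamonds share the vertex $\rho_N$ with no cut-arc between them, so they form a single blob containing $m$ reticulations and the network is level~$m$, not level~$1$; your argument uses vertex-biconnectivity instead. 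The paper's construction avoids this because distinct gadgets are separated by the cut-arcs of the underlying tree.
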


\begin{proof}
Because \PenSum is \NP-hard, we know that \ucNAP is \NP-hard on trees of height~2~\cite{komusiewicz2023multivariate}.
Let \Tree be an $L$-tree of height~2 for some $L$ and let an instance $\Instance = (\Tree, \w, q, k, D)$ of \ucNAP be given.

\begin{figure}[t]
	\centering
	\begin{tikzpicture}[scale=1,every node/.style={scale=0.8}]
		\tikzstyle{txt}=[circle,fill=white,draw=white,inner sep=0pt]
		\tikzstyle{nde}=[circle,fill=black,draw=black,inner sep=2.5pt]
		\node[txt] at (5.1,2.9) {$QM^2$};
		\node[txt] at (5,3.4) {$q/2$};
		\node[txt] at (6,3.9) {$q/(2-q)$};
		
		\node[smallvertex,inner sep=3.5pt, label=right:$\ell$] (l) at (5,4.5) {};
		\node[smallvertex,inner sep=3.5pt, label=left:$v_\ell^{1}$] (vl1) at (4,4) {};
		\node[smallvertex,inner sep=3.5pt, label=right:$v_\ell^{2}$] (vl2) at (5.5,3.5) {};
		\node[leaf,inner sep=4.5pt, label=left:$\ell^{-}$] (lm) at (4,3) {};
		\node[leaf,inner sep=4.5pt, label=right:$\ell^{*}$] (ls) at (5.5,2.5) {};
		
		\draw[dashed,-stealth] (5.4,5.2) to[bend right] (l);
		\draw[thick,-stealth] (l) to (vl1);
		\draw[thick,-stealth] (l) to (vl2);
		\draw[thick,-stealth] (vl1) to (vl2);
		\draw[thick,-stealth] (vl1) to (lm);
		\draw[thick,-stealth] (vl2) to (ls);
	\end{tikzpicture}
	\caption{Illustration of the leaf-gadget. Omitted edge-weights are~1 and $q(\ell)$ is abbreviated to $q$.}
	\label{fig:leaf-gadget}
\end{figure}%
We define a leaf-gadget which is illustrated in \Cref{fig:leaf-gadget}.
Let $\ell\in L$ be a leaf with success-probability $q(\ell)$.
Add four vertices $v_\ell^{1}, v_\ell^{2}, \ell^*, \ell^-$ and edges $\ell v_\ell^{1}, \ell v_\ell^{2}, v_\ell^{1}v_\ell^{2}, v_\ell^{1}\ell^-$, and~$v_\ell^{2}\ell^*$.
The only reticulation in this gadget is $v_\ell^{2}$ with incoming edges $\ell v_\ell^{2}$ and $v_\ell^{1}v_\ell^{2}$.
We set the inheritance probabilities $p(\ell v_\ell^{2}) := q(\ell) / (2 - q(\ell))$ and $p(v_\ell^{1}v_\ell^{2}) := q(\ell) / 2$ which are both in~$[0,1]$ because~$q(\ell) \in [0,1]$.

Let \Net be the network that results from replacing each leaf of \Tree with the corresponding leaf-gadget.
The leaves of \Net are $L' := \{ \ell^*, \ell^- \mid \ell \in L \}$.
Let $d$ denote the largest denominator in a success-probability  $q(\ell)$ of a leaf $\ell$ of $\Tree$, so that every $q(\ell)$ is expressible as $c'/d'$ for some pair of integers $c,d$ such that $d' \leq d$. 
Let $M$ and $Q$ be large integers, such that $M$ is bigger than $\PDsub{\Tree}(L) \ge |L| \ge k$, 
and $Q\cdot D$ and $Q\cdot d^{-k}$ are both bigger than~3.

Observe that the number of bits necessary to write $M$ and $Q$ is polynomial in the size of~\Instance.
We set the weight of edges $e \in E(\Tree)$ in \Net to $\w'(e) = kQ \cdot \w(e)$.
For each $\ell\in L$ we set~$\w'(v_\ell^{2}\ell^*) := Q\cdot M^2$ and $\w'(e) := 1$ for $e \in \{\ell v_\ell^{1}, \ell v_\ell^{2}, v_\ell^{1}v_\ell^{2}, v_\ell^{1}\ell^-\}$.

Finally let $\Instance' := (\Net, \w', p, k, D' := kQ(M^2 + D))$ be an instance of \MaxNPD.
Each leaf-gadget is a level-1 network.
As the leaf-gadgets are connected by a tree, \Net is a level-1 network.
Recall that the height of the tree \Tree is 2, and as such the distance between the root and each leaf in in \Net  is~4.


Before showing that  \Instance and $\Instance'$ are equivalent, we show that $\gam{Z}(e) = q(\ell)$ in the case that $\ell^*\in Z$ but $\ell^-\not\in Z$. Indeed because $\ell^-\not\in Z$, we conclude that $\gam{Z}(\ell v_\ell^{2}) = p(\ell v_\ell^{2}) = q(\ell) / (2 - q(\ell))$ and $\gam{Z}(\ell v_\ell^{1}) = \gam{Z}(v_\ell^{1}v_\ell^{2}) = p(v_\ell^{1}v_\ell^{2}) = q(\ell) / 2$.
Subsequently, 
\begin{equation}\label{eqn:gamma}
	\gam{Z}(e)
	= 1 - (1 - \gam{Z}(\ell v_\ell^{1}))(1 - \gam{Z}(\ell v_\ell^{2})) 
	= 1 - \frac{2 - q(\ell)}{2} \cdot \frac{2 - 2q(\ell)}{2 - q(\ell)}
	= 1 - \frac{2 - 2q(\ell)}{2} 
	= q(\ell).
\end{equation}

``$\Rightarrow$'': Suppose that \Instance is a \yes-instance of \ucNAP and that $S\subseteq L$ is a solution of \Instance, that is~$|S|\le k$ and~$\PDsub{\Tree}(S) \ge D$.
Let $S' := \{ \ell^* \mid \ell \in S \}$ be a subset of~$L'$.
Clearly $|S'| = |S| \le k$.
Because~\Tree does not contain reticulation edges and $\gam{Z}(e) = q(\ell)$ with $e$ being the edge incoming at $\ell$, we conclude that
\begin{align*}
	\NetPD(S')
	\ge kQ\cdot \PDsub{\Tree}(S) + k\cdot \w'(v_\ell^{2}\ell^*)
	\ge kQ\cdot (D + M^2)
	= D'
\end{align*}
hence, $S'$ is a solution of $\Instance'$.

``$\Leftarrow$'':
Let $S'$ be a solution of $\Instance'$.
Let $S^- = S \cap \{ \ell^- \mid \ell\in L \}$ and $S^* = S \cap \{ \ell^* \mid \ell\in L \}$.
Towards a contradiction, assume 
$S^- \neq \emptyset$.
Then, however, using $3 < Q\cdot D$,
\begin{align*}
	\NetPD(S')
	& \le  \sum_{\ell^- \in S^-}(\w'(v_\ell^{1}\ell^-) + \w'(\ell v_\ell^{1})) +  |S^*|(QM^2 + 3)
     + \sum_{e\in E(\Tree)} \w'(e)\\
 & \le 2|S^-| + |S^*|(QM^2 + 3) +  kQM \\
	& \le  2 + (k-1)(QM^2 + 3) + kQM \\
	&	< kQM^2 - QM^2 + kQM + 3k\\
	&   < k(QM^2 + 3)
		< k(QM^2 + QD)
		= D'
\end{align*}
contradicts that $S'$ is a solution.
Therefore, we conclude that $S' \subseteq \{ \ell^* \mid \ell\in L \}$ and $|S'|=k$.
Define $S := \{ \ell \mid \ell^*\in S' \}$.
Subsequently, with \eqref{eqn:gamma} we conclude 
\begin{align*}
	kQ(M^2 + D) 
  = D' 
  & \le \NetPD(S') \\
  & = k\cdot QM^2 + \sum_{\ell \in S}\bigg(\underbrace{\frac{q(\ell)}2 + \frac{q(\ell)}2 + \frac{q(\ell)}{2-q(\ell)}}_{\leq 3}\bigg) + kQ\cdot \PDsub{\Tree}(S).
\end{align*}
It follows that $\PDsub{\Tree}(S) \ge \nicefrac{1}{kQ} \cdot (kQ(M + D) - kQM - 3k) = D - 3/Q$.

It remains to show that $\PDsub{\Tree}(S)$ cannot take any values in the range $[D - 3/Q, D)$, i.e. that $\PDsub{\Tree}(S) \ge D - 3/Q$ implies that  $\PDsub{\Tree}(S) \ge D$.
To this end, let $c_\ell, d_\ell$ be the unique positive integers such that $q(\ell) = c_\ell/d_\ell$ for each leaf $\ell$ in $\Tree$.
Then $q(\ell)$ is a multiple of $1/d_\ell$ by construction, as is $(1 - q(\ell))$.
It follows that for any edge $e$ in $\Tree$, $\gamma'_S(e) = (1 - \prod_{\ell \in \off(e) \cap S}(1- q(\ell)))$ is a multiple of $1/(\prod_{\ell \in S}d_\ell)$.
As all edge weights are integers, we also have that $\PDsub{\Tree}(S)$ is a multiple of $1/(\prod_{\ell \in S}d_\ell)$. It follows that either $\PDsub{\Tree}(S) \geq D$ or $D - \PDsub{\Tree}(S) \geq 1/(\prod_{\ell \in S}d_\ell)$.
As $d_\ell \leq d$ for any $\ell$, this difference is at least $d^{-k} > 3/Q$.
It follows that if $\PDsub{\Tree}(S) \ge D - 3/Q$ then in fact $\PDsub{\Tree}(S) \ge D$.

We conclude $\PDsub{\Tree}(S)\ge D$.
Hence with $|S| = |S'| \le k$ we conclude that $S$ is a solution of~\Instance. Thus, \Instance is a \yes-instance of \ucNAP.
\end{proof}







\section{Discussion}

In this paper, we have studied \MaxNPD{} from a theoretical point of view. These results do have some practical implications. In particular, they show that we can only hope to solve \MaxNPD{} efficiently for evolutionary histories that are reasonably tree-like in the sense that the number of reticulate events is small. For this case, we present an algorithm that is theoretically efficient. How well it works in practice is still to be evaluated. 

Some open questions on the theoretical front remain.
Can \MaxNPD{} be solved in pseudo-polynomial time on level-1 networks?
Is \MaxNPD{} polynomial time solvable on level-1 networks if we require the network to be ultrametric, i.e. when all root-leaf paths have the same length?
Is \MaxNPD{} FPT when parameterized with the number of selected species~$k$ to save plus the level of the network?
Is \MaxNPD FPT when parameterized with the number of different weights and probabilities?
If this is the case, then rounding the weights or probabilities to magnitudes could significantly speed up the running time~\citep{numnum}.

From a practical point-of-view however, the most important task is to assess which variants of phylogenetic diversity on networks (see~\cite{WickeFischer2018}) are biologically most relevant. This could of course depend on the type of species considered and in particular on the type of reticulate evolutionary events. Even if the maximization problem cannot be solved efficiently, having a good measure of phylogenetic diversity can still have great practical use by measuring how diverse a given set of species is.


\appendix

\section{A note about binary representation of rational numbers}\label{sec:rationalEncodings} 

As most of the problems here involve rational numbers as part of the input, it is worth drawing attention to how those numbers are represented, in particular how they affect the input size of an instance.
As is standard, we assume that postitive integers are represented in binary (so that, for instance, the numbers 3, 4 and 5 are written as \texttt{11}, \texttt{100} and \texttt{101} respectively).
 Thus the number of bits required to represent the integer $n$ is $\Oh(\log_2(n))$.
 In the case of rational numbers, we assume throughout that a rational $p/q$ (with $p$ and $q$ coprime integers) can be represented by binary representations of $p$ and $q$. Thus for example, the number $3/5$ may be written as \texttt{11/101}. It follows that $p/q$ can be represented using $\Oh(\log_2(p) + \log_2(q))$ bits.

 For rational numbers which are a multiple of a power of 2, (such as $1/8 = 2^{-3}$, or $5/8 = 5\cdot2^{-3}$), we can write the number by extending the binary representation 'past the decimal point', so that e.g. $1/8$ would be written as \texttt{0.001} and $5/8$ as \texttt{0.101}.
 There is also the 'floating point' representation, where the number is expressed as an integer $t$ times $2$ to some integer $c$, and the numbers $t$ and $c$ are expressed in binary. Thus for example $5/8$ would be written as $\texttt{101} \times 2^\texttt{-11}$.
Both of these methods of representing rationals have the drawback that they cannot represent rationals that are not a multiple of a power of $2$. The number $1/3$, for instance, cannot be expressed exactly under either method.

 This distinction becomes important in \Cref{sec:reduction-PenSum}, where our reduction from \SubProd{} to \PenSum{} produces rational numbers that are not multiples of a power of $2$. 
Do our hardness results for \PenSum, \ucNAP and \MaxNPD{} still hold when one insists on a different method of representing rationals? This is an interesting question, and we make no attempt to answer it.

\iflnbi
  \section{Hardness of Penalty Sum}\label{sec:reduction-PenSum-APX}
  
\fi

\section{Omitted Proofs}

To prove \Cref{lem:XTCtoSubProb}, we reduce the following problem to \SubProd.

\begin{fbox}{
\parbox{0.9\textwidth}{
 \XTClong (\XTC)\\
  \textbf{Input}: A set $X$ with $|X|=3n$, a collection $\mathcal{C}$ of subsets of $X$ with $|C|=3$ for every $C \in \mathcal{C}$.\\
  \textbf{Question}: Is there a collection $\mathcal{C}'\subseteq \mathcal{C}$ such that each element of $x$ appears in exactly one set of $\mathcal{C}'$?
  }}
\end{fbox}

\begin{proof}[Proof of \Cref{lem:XTCtoSubProb}]
	Let $(X := \{x_1,\dots, x_{3n}\},\mathcal{C} := \{C_1,\dots, C_m\})$ be an instance of \XTC.
	Let $p_1,\dots, p_{3n}$ be the first $3n$ prime numbers, so that we may associate each $x_j \in X$ with a unique prime number $p_j$.
	For each set $C_i = \{x_a, x_b, x_c\}$, let $v_i := p_a\cdot p_b \cdot p_c$, that is, $v_i$ is the product of the three primes associated with the elements of $C_i$.
	Now let $M := \prod_{j=1}^{3n} p_j$, i.e. $M$ is the product of the prime numbers $p_1,\dots, p_{3n}$.
	Finally let $k = n$.
	This completes the construction of an instance  $(\{v_1,\dots v_m\}, M, k)$ of \SubProd. 

	Now observe that if $\prod_{i \in S}v_i = M$ for some $S \subseteq [m]$, then by uniqueness of prime factorization, every prime number $p_1,\dots, p_m$ must appear exactly once across the prime factorizations of all numbers in $\{v_i:i\in S\}$. It follows by construction that the collection of subsets $\mathcal{C}' := \{C_i: i \in S\}$ contains each element of $X$ exactly once. Thus, if $(\{v_1,\dots v_m\}, M, k)$ is a \yes-instance of \SubProd then 
	$(X,\mathcal{C})$ is a \yes-instance of \XTC.
	Conversely, if $(X,\mathcal{C})$ is a \yes-instance of \XTC with solution $\mathcal{C'}$, then we can define $S:= \{i \in [m]: C_i \in \mathcal{C}'\}$. Since every element of $X$ appears in exactly one $C_i \in \mathcal{C}'$ and $|C_i| = 3$ for all $i \in [m]$, we have that $|\mathcal{C}'| = |X|/3 = n = k$, and $\prod_{i \in S}v_i = p_1\cdot\dots\cdot p_{3n} = M$. Thus $(\{v_1,\dots v_m\}, M, k)$ is a \yes-instance of \SubProd.

	It remains to show that the construction of $(\{v_1,\dots v_m\}, M, k)$ from  $(X,\mathcal{C})$ takes polynomial time.
	In particular, we need to show that each of the primes $p_1,\dots p_{3n}$ (and thus the product $M$) can be constructed in polynomial time. This can be shown using two results from number theory: 
	$p_j <  j(\ln j + \ln \ln j)$ for $j\geq 6$, \cite{Rosser41, Dusart99} 
	and the set of all prime numbers in $[Z]$ can be computed in time $\Oh(Z/ \ln \ln Z)$~\cite{AtkinBernstein04}. 
	Combining these, we have that the first $3n$ prime numbers can be generated in time  
	$\Oh(n \ln n / \ln \ln n)$.

	Given the prime numbers $p_1,\dots, p_{3n}$, it is clear that the numbers $\{v_i: i\in [m]\}$ can also be computed in polynomial time. The number $M$, being the product of $3n$ numbers each less than $3n(\ln 3n + \ln \ln 3n)$, can also be computed in time polynomial in $n$ (though $M$ itself is not polynomial in $n$).
	It follows that $(\{v_1,\dots v_m\}, M, k)$ can be constructed in polynomial time.
\end{proof}

\begin{proof}[Proof of \Cref{lem:logDifferenceBound}]
	We first show that it is enough to consider the cases $Q' = Q+1$ and $Q' = Q-1$.
	Fix an integer $Q \in \mathbb{N}_+$ with $Q \geq 2$.
	Consider the function $h_Q:\mathbb{R}_{>0}\rightarrow \mathbb{R}$ given by
	$$h_Q(x) = \ln x - \ln Q + Q/x - 1.$$

	So our aim is to show that $h_Q(Q') \geq Q^{-4}$.
	Similar to the proof of \Cref{lem:PenSumIrrationalMax}, we can observe that 
	$$\frac{dh_Q}{dx} = x^{-1} - Qx^{-2} = \frac{1}{x}\left(1- \frac{Q}{x}\right)$$
	is less than $0$ when $x < Q$, exactly $0$ when $x = Q$, and greater than $0$ when $x>Q$.
	It follows that on the range $x>0$, the function $h_Q$ has a unique minimum at $x = Q$, and is decreasing on the range $x< Q$ and increasing on the range $x>Q$.
	Thus in particular $h_Q(Q') \geq h_Q(Q-1)$ if $Q' \leq Q-1$ and $h_Q(Q') \geq h_Q(Q+1)$ if $Q' \geq Q+1$.
	Since either $Q' \leq Q-1$ or $Q' \geq Q+1$ for any integer $Q'\neq Q$, it remains to show that 
	$h_Q(Q-1) > Q^{-4}$ and  $h_Q(Q+1) > Q^{-4}$.
	
	To show $h_Q(Q-1) > Q^{-4}$ for any $Q \in \mathbb{N}_{\ge 2}$: Let $\lambda:\mathbb{R}_{>0}\rightarrow\mathbb{R}$ be the function given by 
	\begin{eqnarray*}
	\lambda(Q) & = & h_Q(Q-1) - Q^{-4}\\
	&=& \ln(Q-1) - \ln Q + Q/(Q-1) - 1 - Q^{-4}\\
	&=& \ln(Q-1) - \ln Q + 1/(Q-1) - Q^{-4}.
	\end{eqnarray*}

	Then 
	\begin{eqnarray*}
		\frac{d\lambda}{dQ} &=& (Q-1)^{-1} - Q^{-1} + (Q-1)^{-2} + 4Q^{-5}\\
		&>& (Q-1)^{-2} + 4Q^{-5} \\
		&>& 0.
	\end{eqnarray*}

	It follows that $\lambda$ is a (strictly) increasing function.
	Since $\lambda(2) = 0 - \ln 2 + 1 - 1/16 \approx 0.244 > 0$, it follows that $\lambda(Q) > 0$ for all $Q\geq 2$, and thus $h_Q(Q-1) > Q^{-4}$.

	To show that $h_Q(Q+1) > Q^{-4}$ for all $Q \in \mathbb{N}_{\geq 2}$:
	First observe that if $Q = 2$, then $h_Q(Q+1) = \ln(3) - \ln(2) + 2/3 - 1 \approx 0.0721 > 0.0625 = 2^{-4}$ and so the claim is true.
	For $Q \geq 3$, observe that $h_Q(Q+1) = \ln(Q+1) - \ln Q + \frac{Q}{Q+1} - 1 = \ln(\frac{Q+1}{Q}) - \frac{1}{Q+1}$.
	We use the Mercator series for the natural logarithm:
	\begin{eqnarray*}
		\ln\left(\frac{Q+1}{Q}\right)
		= \ln\left(1+\frac{1}{Q}\right)  
		= \sum_{k = 1}^\infty \frac{(-1)^{k+1}}{kQ^k}
		= \frac{1}{Q} - \frac{1}{2Q^2} + \frac{1}{3Q^3} - \frac{1}{4Q^4} + \dots
	\end{eqnarray*}

	Since $\frac{1}{kQ^k} - \frac{1}{(k+1)Q^{k+1}} >0$ for all $k > 0$, we can omit all but the first two terms to get
	\begin{eqnarray*}
		\ln\left(\frac{Q+1}{Q}\right)
		 & > & \frac{1}{Q} - \frac{1}{2Q^2} \\
		 & = & \frac{2Q - 1}{2Q^2}.
	\end{eqnarray*}

	Then 
	\begin{eqnarray*}
		\ln\left(\frac{Q+1}{Q}\right) - \frac{1}{Q+1}
		& > & \frac{2Q - 1}{2Q^2} - \frac{1}{Q+1} \\
		& = & \frac{(2Q-1)(Q+1) - 2Q^2}{2Q^2(Q+1)} \\
		& = & \frac{2Q^2 + Q - 1 - 2Q^2}{2Q^2(Q+1)} \\
		& = & \frac{Q-1}{2Q^2(Q+1)} \\
		& \geq & \frac{1}{Q^2(Q+1)} \\
		& > & \frac{1}{Q^4} \\
	\end{eqnarray*}
	where the last two inequalities use $Q \geq 3$.
\end{proof}


\begin{thebibliography}{20}
	\providecommand{\natexlab}[1]{#1}
	\providecommand{\url}[1]{\texttt{#1}}
	\expandafter\ifx\csname urlstyle\endcsname\relax
	\providecommand{\doi}[1]{doi: #1}\else
	\providecommand{\doi}{doi: \begingroup \urlstyle{rm}\Url}\fi
	
	\bibitem[{Atkin} and {Bernstein}(2004)]{AtkinBernstein04}
	A.~O.~L. {Atkin} and D.~J. {Bernstein}.
	\newblock {Prime sieves using binary quadratic forms}.
	\newblock \emph{Mathematics of Computation}, 73:\penalty0 1023--1030, 2004.
	
	\bibitem[Bordewich et~al.(2022)Bordewich, Semple, and
	Wicke]{bordewich2022complexity}
	Magnus Bordewich, Charles Semple, and Kristina Wicke.
	\newblock On the {C}omplexity of optimising variants of {P}hylogenetic
	{D}iversity on {P}hylogenetic {N}etworks.
	\newblock \emph{Theoretical Computer Science}, 917:\penalty0 66--80, 2022.
	
	\bibitem[Dusart(1999)]{Dusart99}
	Pierre Dusart.
	\newblock {The $k$th Prime is Greater than $k(\ln k + \ln \ln k - 1)$ for $k
		\geq 2$}.
	\newblock \emph{Mathematics of Computation}, 68\penalty0 (225):\penalty0
	411--415, 1999.
	
	\bibitem[Faith(1992)]{faith1992conservation}
	Daniel~P. Faith.
	\newblock Conservation evaluation and phylogenetic diversity.
	\newblock \emph{Biological Conservation}, 61\penalty0 (1):\penalty0 1--10,
	1992.
	
	\bibitem[Fellows et~al.(2012)Fellows, Gaspers, and Rosamond]{numnum}
	Michael~R. Fellows, Serge Gaspers, and Frances~A. Rosamond.
	\newblock Parameterizing by the number of numbers.
	\newblock \emph{Theory of Computing Systems}, 50:\penalty0 675--693, 2012.
	
	\bibitem[Garey and Johnson(1979)]{GareyJohnson79}
	M.~R. Garey and David~S. Johnson.
	\newblock \emph{{Computers and Intractability: {A} Guide to the Theory of
			NP-Completeness}}.
	\newblock W. H. Freeman, 1979.
	\newblock ISBN 0-7167-1044-7.
	
	\bibitem[Huson et~al.(2010)Huson, Rupp, and Scornavacca]{huson2010phylogenetic}
	Daniel~H. Huson, Regula Rupp, and Celine Scornavacca.
	\newblock \emph{{Phylogenetic Networks: Concepts, Algorithms and
			Applications}}.
	\newblock Cambridge University Press, 2010.
	
	\bibitem[Jones and Schestag(2023)]{MAPPD}
	Mark Jones and Jannik Schestag.
	\newblock {How Can We Maximize Phylogenetic Diversity? Parameterized Approaches
		for Networks}.
	\newblock In \emph{Proceedings of the 18th International Symposium on
		Parameterized and Exact Computation (IPEC 2023)}, pages 30:1--30:12.
	Schloss-Dagstuhl-Leibniz Zentrum f{\"u}r Informatik, 2023.
	
	\bibitem[Koblm{\"u}ller et~al.(2007)Koblm{\"u}ller, Duftner, Sefc, Aibara,
	Stipacek, Blanc, Egger, and Sturmbauer]{KDS+07}
	Stephan Koblm{\"u}ller, Nina Duftner, Kristina~M. Sefc, Mitsuto Aibara, Martina
	Stipacek, Michel Blanc, Bernd Egger, and Christian Sturmbauer.
	\newblock {Reticulate phylogeny of gastropod-shell-breeding cichlids from Lake
		Tanganyika--the result of repeated introgressive hybridization}.
	\newblock \emph{BMC Evolutionary Biology}, 7:\penalty0 1--13, 2007.
	
	\bibitem[Komusiewicz and Schestag(2025)]{komusiewicz2023multivariate}
	Christian Komusiewicz and Jannik Schestag.
	\newblock {A Multivariate Complexity Analysis of the Generalized Noah's Ark
		Problem}.
	\newblock Elsevier, 2025.
	
	\bibitem[Merz et~al.(2023)Merz, Barnard, Rees, Smith, Maroni, Rhodes, Dederer,
	Bajaj, Joy, Wiedmann, and Sutherland]{doi:10.1177/00368504231201372}
	Joseph~J. Merz, Phoebe Barnard, William~E. Rees, Dane Smith, Mat Maroni,
	Christopher~J. Rhodes, Julia~H. Dederer, Nandita Bajaj, Michael~K. Joy,
	Thomas Wiedmann, and Rory Sutherland.
	\newblock {World scientists’ warning: The behavioural crisis driving
		ecological overshoot}.
	\newblock \emph{Science Progress}, 106\penalty0 (3):\penalty0
	00368504231201372, 2023.
	
	\bibitem[Minh et~al.(2006)Minh, Klaere, and von Haeseler]{PDinSeconds}
	Bui~Quang Minh, Steffen Klaere, and Arndt von Haeseler.
	\newblock {Phylogenetic Diversity within Seconds}.
	\newblock \emph{Systematic Biology}, 55\penalty0 (5):\penalty0 769--773, 10
	2006.
	
	\bibitem[Moret(1997)]{Moret97}
	Bernard~M. Moret.
	\newblock \emph{The theory of computation}.
	\newblock Addison-Wesley, 1997.
	
	\bibitem[Pardi and Goldman(2005)]{pardi2005species}
	Fabio Pardi and Nick Goldman.
	\newblock Species choice for comparative genomics: being greedy works.
	\newblock \emph{PLoS Genetics}, 1\penalty0 (6):\penalty0 e71, 2005.
	
	\bibitem[Ripple et~al.(2017)Ripple, Wolf, Newsome, Galetti, Alamgir, Crist,
	Mahmoud, Laurance, and 15]{10.1093/biosci/bix125}
	William~J. Ripple, Christopher Wolf, Thomas~M. Newsome, Mauro Galetti, Mohammed
	Alamgir, Eileen Crist, Mahmoud~I. Mahmoud, William~F. Laurance, and 364
	scientist signatories from 184~countries 15.
	\newblock {World Scientists’ Warning to Humanity: A Second Notice}.
	\newblock \emph{BioScience}, 67\penalty0 (12):\penalty0 1026--1028, 11 2017.
	\newblock ISSN 0006-3568.
	
	\bibitem[Rosser(1941)]{Rosser41}
	Barkley Rosser.
	\newblock {Explicit Bounds for Some Functions of Prime Numbers}.
	\newblock \emph{American Journal of Mathematics}, 63\penalty0 (1):\penalty0
	211--232, 1941.
	
	\bibitem[Steel(2005)]{steel2005phylogenetic}
	Mike Steel.
	\newblock Phylogenetic diversity and the greedy algorithm.
	\newblock \emph{Systematic Biology}, 54\penalty0 (4):\penalty0 527--529, 2005.
	
	\bibitem[Weitzman(1998)]{weitzman1998noah}
	Martin~L. Weitzman.
	\newblock The {Noah}'s ark problem.
	\newblock \emph{Econometrica}, pages 1279--1298, 1998.
	
	\bibitem[Wicke and Fischer(2018)]{WickeFischer2018}
	Kristina Wicke and Mareike Fischer.
	\newblock Phylogenetic diversity and biodiversity indices on phylogenetic
	networks.
	\newblock \emph{Mathematical Biosciences}, 298:\penalty0 80--90, 2018.
	
	\bibitem[Wicke et~al.(2021)Wicke, Mooers, and Steel]{wicke2021formal}
	Kristina Wicke, Arne Mooers, and Mike Steel.
	\newblock Formal links between feature diversity and phylogenetic diversity.
	\newblock \emph{Systematic Biology}, 70\penalty0 (3):\penalty0 480--490, 2021.
	
\end{thebibliography}
\end{document}